    \theoremstyle{definition}
    \newtheorem{assumption}{Assumption}
    \newtheorem{definition}{Definition}
    \theoremstyle{plain}
    \newtheorem{proposition}{Proposition}
    \newtheorem{result}{Result}
    \DeclareMathOperator*{\argmax}{arg\,max}
\newcommand{\inv}{^{-1}}
\title[Strategic formation of collaborative networks]{Strategic formation of collaborative networks}
\author[Philip Solimine \and Luke Boosey]{Philip Solimine$^1$ \and Luke Boosey$^2$}
\date{\today}
\thanks{
The authors would like to thank Christopher Brown, Arun Chandrasekhar, Krishna Dasaratha, Dean Eckles, Matthew Gentry, Ben Golub, Mark Isaac, Matthew Jackson, Annalise Maillet, Angelo Mele, Luca Paolo Merlino, Anke Meyer-Baese, Svetlana Pevnitskaya, Tim Rooney, Jonathan Stewart, and Cynthia Fan Yang for their advice and feedback, valuable discussions at various stages of production, and/or comments on earlier drafts. Thanks also to seminar participants at UBC, ESA Job Market Candidates Seminar, Conference(s) of Network Science in Economics, Conference(s) of the Southern Economic Association, NetSci satellite on Statistical Physics of Financial and Economic Networks, XS/FS and Quantitative Methods working groups at FSU, and the FSU Computational Exposition. Finally, we are grateful to the Charles \& Persis Rockwood Fellowship, the L. Charles Hilton Center for Economic Prosperity and Individual Opportunity, and the FSU College of Social Sciences \& Public Policy for their generous support. This paper has previously been presented in working form under the title ``Resource sharing on endogenous networks''.\\ 
$^1$ Vancouver School of Economics, University of British Columbia \\\textit{Correspondence:} philip.solimine@ubc.ca \\
$^2$ Department of Economics, Florida State University}
\begin{document}
\begin{abstract}
   We examine behavior in an experimental collaboration game that incorporates endogenous network formation. The environment is modeled as a generalization of the voluntary contributions mechanism. By varying the information structure in a controlled laboratory experiment, we examine the underlying mechanisms of reciprocity that generate emergent patterns in linking and contribution decisions. Providing players more detailed information about the sharing behavior of others drastically increases efficiency, and positively affects a number of other key outcomes. To understand the driving causes of these changes in behavior we develop and estimate a structural model for actions and small network panels and identify how social preferences affect behavior. We find that the treatment reduces altruism but stimulates reciprocity, helping players coordinate to reach mutually beneficial outcomes. In a set of counterfactual simulations, we show that increasing trust in the community would encourage higher average contributions at the cost of mildly increased free-riding. Increasing overall reciprocity greatly increases collaborative behavior when there is limited information but can backfire in the treatment, suggesting that negative reciprocity and punishment can reduce efficiency. The largest returns would come from an intervention that drives players away from negative and toward positive reciprocity.
\end{abstract}

\maketitle

\clearpage
\section{Introduction}

 We study a voluntary collaboration game in which players choose how much, and with whom, to share. In this setting, shared resources generate externalities for other players, but the player sharing the resources does not derive any immediate or direct benefit from doing so. Viewing the decisions about who to share with as a process of endogenous network formation, we estimate a structural model of preferences and use it to infer the value of different social preferences for trust and reciprocity.

Consider, for example, the time allocation problem faced by a researcher involved in multiple projects with different sets of coauthors. The researcher has a limited amount of time and concentration power to dedicate towards coauthored projects and her own research activity. Allocating attention to coauthored projects benefits the coauthors, but effort is a scarce resource, and spreading effective contributions across multiple projects reduces the impact on each project individually. By creating efficiency gains, this type of collaboration can play a valuable role in the economic systems in which it is embedded. Apart from scientific collaboration, a number of relevant real-world examples fit into the basic framework of collaboration with congestion. On the internet, web pages make linking decisions; they are paid advertising revenue for traffic, and they are indexed by a search engine that favors well-linked pages, such as Google's PageRank algorithm \citep{page1999pagerank}. Linking to another site diverts a fraction of traffic and amplifies search appearances for the linked site. Another example concerns the networks of firms who form links by collaborating on R\&D projects \citep[see, e.g.][]{dasaratha2023innovation}, and determine how much intellectual property they will share with other firms through this process.

A pervasive feature of these collaborative environments, in both digital and analog settings, is that a player who invests their effort in generating externalities can often choose where to direct them. This pattern of interaction generates a network whose structure can be informative of the process that governs its formation and evolution. Understanding how reputation and information affect the evolutionary behavior of interconnected social systems can help inform platform designers and policymakers on ways to design digital platforms that promote efficiency.

While the examples above are distinct from each other in several ways, they each highlight some key common characteristics of voluntary sharing in collaborative environments. First, the decision to share naturally entails some individual cost to the sharer. Second, shared resources are, in many cases, congestible. They congest in the sense that sharing the same resources across multiple recipients reduces the effective benefit derived by each recipient. Third, sharing is conventionally associated with altruistic or reciprocal motivations---for example, webpage linking decisions may be forged reciprocally; while firms and coauthors may be more inclined to dedicate their time to a collaborative project if they notice that their fellow researchers are similarly dedicated.

In such settings, trust and reciprocity play key roles in helping players coordinate to reach efficient outcomes \citep{fehr1997reciprocity,fehr1998reciprocity}. People with social preferences tend to enjoy helping others who have helped them. By controlling the specific structure and content of the information a player has about their neighbors' actions, we aim to reason about how changes in these social preferences can be identified by the changing patterns of collaborative behavior.

In order to examine the underlying influences on voluntary resource sharing, we designed an experiment to study a simple sharing environment that incorporates the three key features outlined above. In the experiment, subjects were randomly assigned to groups of four players. Groups interacted with each other for 15 rounds (i.e., without random rematching between rounds). In each round, players privately and simultaneously made two choices; (i) how much of their endowment (20 tokens) to share (referred to as their contribution), and (ii) with whom to share, forming a network. Players retained any endowment not shared. Player $i$'s contribution (her shared resources) generated benefits for her selected recipients and for player $i$ herself; however, the shared resources were congestible, so that each received only some fraction of a token for each contributed token shared by player $i$. More specifically, every contributed token was multiplied by a factor of 1.6, with the resulting proceeds shared equally between player $i$ and her selected recipients.\footnote{This specification of the benefits from shared resources is, by design, comparable to the widely adopted implementation of the \textit{marginal per capita return} (MPCR) in VCM or linear public goods experiments.} At the end of the round, player $i$'s payoff was calculated as the sum of her retained endowment, the return on her own shared resources, and the incoming benefits derived from others who selected player $i$ as a recipient of their own shared resources.

Our experiment allows us to examine the impact of the information structure on reciprocal behavior. In the control (or baseline) condition, players are given information only about the total inflow of benefits from others after each round, but cannot identify the source of those benefits. In this way, direct reciprocity is precluded, although sharing behavior may nevertheless be driven in part by altruism or generalized reciprocity. In the information treatment, players are informed about the incoming benefits from each of the other three group members. In the treatment, they can specifically identify which other group members shared with them, and how much. This enables direct reciprocity to influence both contribution and network formation decisions. In each session of the experiment, groups first played 15 rounds of this collaboration game in the control condition with limited information. Then, they were informed that they would play another 15 rounds with the same group (albeit with shuffled IDs). In control sessions, the second part simply repeated the control condition for 15 more rounds. In the treatment sessions, these rounds included the additional information.

By virtue of the endogenous linking decisions, our study contributes to the growing literature on network formation games, and especially to more recent work examining simultaneous action and link formation decisions. Network formation problems are notoriously difficult to analyze, but have far-reaching consequences and applications in economics and related fields alike.\footnote{See \cite{chandrasekhar2016econometrics} for a comprehensive introduction to the study of network formation.} Fundamentally, the observation of a panel of small networks over time, rather than that of a single large cross-section, allows us to adopt a more sophisticated and innovative structural approach. These methods build on the recent techniques developed by \cite{badev2021nash} and \cite{mele2017astructural} for large cross-sectional observations of social and economic networks. More specifically, panel data allows us to estimate the discrete choice process directly, avoiding the computational issues associated with large-network cross-sectional estimation. As such, one of our main contributions is methodological in nature---we implement a new approach \citep[see, e.g.,][]{overgoor2019choosing,gupta2022mixed} to estimate the network formation game as a combination of individual evolutionary discrete choice strategies.

Our other main contributions are more conceptual in nature. The voluntary sharing environment we study is novel, but it exhibits some similarities with other well-studied strategic decision settings, including dictator giving, the provision of public goods or club goods, and public goods games played on networks. As such, our study is closely related to the literature examining network public goods games \citep{bramoulle2007public,elliott2019network,boosey2017conditional}, and especially to work that extends the game to incorporate endogenous linking decisions \citep{galeotti2010law,kinateder2017public,kinateder2021evolution}. Importantly, while existing literature has focused on settings in which links describe the decision to access another player's provision of the good, our environment captures the reverse situation---the links formed by each player describe the decision to \emph{grant access} to the benefits generated by her own contribution.

Our reduced-form results alone point to striking effects of the information treatment. We find that when subjects can observe who shares with them, as well as how much they share, contributions and the average number of links increase substantially. Fewer nodes are isolated, facilitating the formation of efficient structure, and contribution profiles are more balanced (decentralized) across group members. As one might expect, the availability of more detailed information leads to contribution and linking patterns that exhibit dramatically higher levels of reciprocal behavior than in the baseline condition.

Building off these stark reduced-form results, we develop and estimate a structural model that incorporates both generalized reciprocity and direct reciprocity as core behavioral considerations for the participants. Our estimates suggest that in the baseline condition (without specific information), sharing behavior is driven partly by generalized reciprocity. By contrast, when subjects are provided with more specific information, sharing behavior is heavily influenced by direct reciprocity, and subjects substitute sharply away from generalized reciprocity concerns in favor of this more focused form of reciprocity.

The estimates of our structural parameters are robust to the inclusion of several heterogeneous individual characteristics, which we construct based on subjects' responses to a post-experiment survey. We distill the responses down to three key attributes using a principal components analysis, representing trust, overall reciprocity, and positive reciprocity. Incorporating these heterogeneous characteristics into our estimations, we observe clear and intuitive effects of heterogeneity on behavior. In the baseline condition, higher scores on the trust attribute are consistent with trustworthy behavior (stronger effects of generalized reciprocity), but predict less trusting decisions as subjects tend to be more cautious about sharing altruistically, perhaps out of mistrust in the notion that others will reciprocate. In the treatment condition, knowing that others observe the source of incoming benefits allows subjects with higher trust attributes to be more trusting.

The two principal components of reciprocity---overall reciprocity and positive reciprocity---also explain much of the variation in behavioral patterns. Overall reciprocity, which captures both a taste for positive and negative reciprocity, has a nuanced effect in the two conditions. In the baseline, subjects with a higher overall reciprocity attribute are more altruistic and exhibit a stronger preference for generalized reciprocity. In the treatment condition, they shift more weight towards direct reciprocity, as expected, but also exhibit less altruism and less concern for generalized reciprocity. One interpretation for these differences is that in the treatment condition, the additional information about sharing decisions of others facilitates negative reciprocity (punishment of those who do not share) more than it enhances positive reciprocity.

Subjects with higher scores on the positive reciprocity attribute are also more altruistic in the baseline condition. However, in contrast with our results for the overall reciprocity attribute, in the treatment condition, they share generously with the entire group in hopes of stimulating the reciprocity of others, rather than resorting to punishment out of negative reciprocity.

Finally, we use our structural framework to conduct three counterfactual simulations, each examining the effects of a uniform increase in one of the three principal attributes---trust, overall reciprocity, and positive reciprocity. Consistent with our estimates for the model with individual heterogeneity, an increase in the trust attribute improves key outcomes in the information treatment, but has a mild, negative impact on the baseline condition. Overall reciprocity improves outcomes in the baseline condition but can backfire substantially due to negative reciprocity in the treatment condition. However, a uniform increase in subjects' positive reciprocity attribute has a substantial positive effect on all key outcomes in both the treatment and baseline conditions.

Our study naturally contributes to three important and active bodies of literature in economics. We briefly review these connections---to network public goods games, to structural estimation of network formation data, and to social preferences for trust and reciprocity---in the remainder of this section.

\subsection{Network provision of public goods}

There has been prior work examining both the extension of public goods to static (exogenous) networks, and the provision of public goods on endogenous networks. In particular, \cite{bramoulle2007public} launched research into this environment, by showing that given a network shape, specialized Nash equilibria (in which a small subset of players provide effort, and those connected to them free-ride) tend to be not only stable but also the most efficient in terms of overall welfare. Further, they find that, in an environment with agent heterogeneity, these specialized equilibria are often unique. This is especially relevant since, in our endogenous network environment, an agent's marginal cost of effort is dependent on their position in the network, and in particular how many others they choose to share with. 

\cite{elliott2019network} characterize outcomes in public goods games on exogenous networks by the spectrum of a matrix called the benefits matrix, in which each entry gives the marginal rate of substitution between decreasing own contribution and increased benefits from a neighbor in a fixed network. Their results tie the existence of Pareto-efficient outcomes to the spectral radius of the benefits matrix and characterize the Lindahl outcomes as those with effort proportional to each individual's eigenvector centrality in the graph described by the benefits matrix. Many of their results rely on the connectedness of the benefit graph, which is not guaranteed in random graphs. In the case of endogenous formation this assumption may not be satisfied, rendering spectral methods difficult to implement outside of particular special cases involving links that are fully independent \citep[e.g.][]{dasaratha2020distributions, parise2023graphon}.

Finally, in the exogenous/fixed network case, \cite{boosey2017conditional} uses data from a laboratory experiment to examine the mechanisms for cooperation in a repeated network public goods game. Experimental results showed a significant portion of subjects playing strategies of \textit{conditional cooperation}, in which subjects play strategies which react strongly to the behavior of their neighbors in previous rounds. We incorporate this phenomenon into our structural model, by placing strategies on an evolutionary spectrum from reactionary to predictive. When playing a purely reactive strategy under bounded rationality, simultaneous play may not converge to a stage-game equilibrium \citep{alos2010logit, hommes2012multiple}. 

Prior extensions of public goods provision to environments with endogenous linking include \cite{galeotti2010law}, which furthers the specialization result of \cite{bramoulle2007public}. These papers emphasize the prevalence of core-periphery architectures as equilibrium networks, but in a setting where players choose incoming, rather than outgoing links. \cite{kinateder2017public} expand on this result, showing how heterogeneity in costs and benefits to effort in such an environment generates equilibrium networks which take the form of a tiered multipartite graph. In the heterogeneous case, individuals with comparative advantage in production costs bear most of the brunt of contribution, while peripheral players, corresponding to those with higher valuations of the good, pay the linking costs associated with linking to the core (or inner levels of the multipartite network). 

In another relevant study by \cite{rand2011dynamic}, the authors conducted an experiment to gauge the effects of endogenous networks on cooperation in a repeated prisoner's dilemma. By varying the opportunity for network updates, they showed that subjects are able to take advantage of their ability to change social ties in order to refine their neighborhoods and increase efficiency. Our results show that, while the endogeneity of the network itself does allow for this fine-tuning of the social neighborhood, the dynamics of the network alone are not sufficient to support long-term efficient outcomes. Instead, a platform that aims to nudge players toward efficient social structure should take advantage of its ability to shape and distribute information to its users.

Our model departs from the existing literature on public goods in endogenous networks in a number of ways. Primarily, we model a situation in which individuals choose others with whom they would like to share the externalities generated by their resources. This is the reverse of the situations studied in the previous literature on public goods and sharing on endogenous networks \citep[e.g.][]{galeotti2010law,kinateder2017public, brown2024team}, in that individuals choose the \textit{outgoing} direction of their externalities, rather than the incoming direction of others' externalities. The cost of linking in this study's environment is explicitly tied to the effort or contribution level and can be flexibly specified to represent pure or impure (congestive) externalities. Also in this voluntary sharing environment, there is a unique stage-game Nash equilibrium of no contributions and no linking. This is, however, in stark contrast to what we actually observe in the laboratory implementation, and provides a rich environment to identify and analyze the structure of social preferences.


\cite{hiller2022simple} presents a similar model of network formation with a continuous effort choice and where effort levels are strategic complements. In their paper, while there are congestion costs associated with linking highly, these congestion costs are framed as fixed costs associated with additional links rather than as an increased marginal cost of effort. In our setting, actions are perfect substitutes and the only complementarities among players' actions are those created by their preferences for reciprocity. Finally, \cite{dasaratha2023innovation} also derives theoretical results for a similar model of collaboration with a specific application to collaborative innovation in industrial networks.

\subsection{Structural estimation from network data}
In addition to the novelty of the sharing environment, we also contribute to the growing literature on estimation in network formation games; a problem which is notoriously difficult to analyze, but with important applications \citep{chandrasekhar2016econometrics}. Thus far, the literature has focused predominantly on preference estimation from large-network cross-sectional data. This is not without reason---panel data consisting of repeated observations from small networks over time is rare and difficult to find in practice. By collecting experimental data, however, we can carefully control group size and monitor progress over time; generating a panel of networks from which estimation is particularly tractable and relies on fewer assumptions. More specifically, by observing a panel of small networks over time, rather than that of a single large cross-section, we are able to estimate the discrete choice process directly, reducing the set of computational issues typically encountered in large-network cross-sectional estimation. Estimation of a network formation game as a combination of individual evolutionary discrete choice strategies is a new approach, examined only very recently by \cite{overgoor2019choosing} and \cite{gupta2022mixed}.

Cross-sectional network formation estimators rely on assumptions about the meeting process and dynamics that guarantee convergence to a stochastically stable stationary distribution, also called a Quantal Response Equilibrium (QRE). While the QRE \citep{mckelvey1995quantal,mckelvey1998quantal} is a fixed point stationary distribution of the logit-response (logit best-reply) dynamics, in the case of simultaneous revision opportunities this fixed point is potentially unstable. This means that play may instead exhibit a Hopf bifurcation and converge to a limit cycle or stable orbit, rather than to the fixed point QRE distribution \citep{alos2010logit}. Estimating the individual strategies, however, rather than imposing stability and estimating the QRE, allows us to comment on the convergence of the calibrated system, as well as to draw from the steady-state distribution under arbitrary specifications of the revision opportunity (that is, since we estimate the utility parameters directly). This means that we can simulate draws of the steady-state distribution for large networks under sequential-move individual revisions, using the methods of \cite{badev2021nash}---the first paper to examine identification in discrete-choice games taking place on endogenous networks---in which agents choose both a set of links and an action or investment level. The method used is closely related to the one used by \cite{mele2017astructural} to estimate structural parameters in such a setting, but differs by leveraging panel structure to avoid the assumption that the network has already converged to its steady-state distribution after sufficient iterations of an individual revision process.

\subsection{Social preferences for trust and reciprocity} There is a large literature in behavioral and experimental economics that points toward the importance of various behavioral traits and heterogeneous characteristics of trust and reciprocity in sharing behavior. A large series of studies including \cite{fehr1997reciprocity, fehr1998reciprocity, fehr2000fairness, camerer2003behavioural, cox2004identify} have leveraged experimental evidence to highlight the use of trust and reciprocity as devices for contract enforcement and for driving cooperation in markets and sharing games. Throughout these works it is emphasized that reciprocity manifests not only positively, but can also be used to characterize punishment behavior. While trust and reciprocity in these settings often drive efficiency gains, we show through counterfactuals that which characteristics matter the most is highly context-dependent; both trust and reciprocity can backfire as tools to promote efficiency, depending on the information structure.

This result agrees with some more recent work examining the role of these characteristics in supporting positive market outcomes. For example, \cite{choi2022market} finds evidence suggesting that providing reputation systems in experimental markets interacts with preferences primarily by giving participants more information about whom not to trust. Subsequent work by \cite{solimine2023reputation} supports this result and further emphasizes the role of the information in determining the effectiveness of trust in promoting positive market outcomes. Our counterfactual findings involving trust agree with these findings; through the way that trust interacts with preferences for reciprocity and altruism, promoting trust in the community dramatically improves outcomes when subjects are provided with detailed information about others' behavior. When information is more limited, however, introducing higher levels of trust increases trustworthy behavior by some but may backfire by allowing others to take advantage of this change.

These results also agree with other experimental work that has specifically focused on the role of trust in experimental sharing environments. In particular, \cite{glaeser2000measuring} finds that survey questions about trust (similar to the questions we used to measure trust) are effective at predicting trustworthy behavior but less so at predicting trusting behavior. The distinction between trusting and trustworthy behavior is mirrored in \cite{anderson2004social}, who find that certain measures of trust are negatively associated with sharing in public goods experiments.

Studies that focus specifically on reciprocity have found similar results. \cite{fehr2000fairness} explains that ``[reciprocity] means that in response to
friendly actions, people are frequently much nicer and much more cooperative than predicted by the self-interest model; conversely, in response to hostile actions they are frequently much more nasty and even brutal.'' Indeed, reciprocity is not always positive and evidence from the intersection of neuroscience and economics \citep{fehr2005neuroeconomic} agrees that reciprocity can really be broken down into two pieces---rewards from mutual cooperation and a taste for punishment of unfair behavior. Through a factor analysis of survey responses, we break reciprocity into these two components; one that measures ``overall reciprocity'' and positively weights both types, and one component that distinguishes between positive and negative reciprocity. We find that while increasing trust can mildly increase free-riding in the absence of information, and promoting overall reciprocity can backfire in the presence of information by stimulating punishment. However, an intervention that is specifically targeted at moving players away from punishment and instead focuses on rewarding mutual cooperation can be effective at promoting collaboration in both information environments.

\subsection{Outline of the paper.} In Section~\ref{sec:theory}, we lay out a simple theoretical framework for the collaborative sharing environment. Section~\ref{sec:experiment} describes the design and procedures for the laboratory experiment testing the effects of different information structure on collaboration patterns. We present and discuss the reduced-form results of the experiment in Section~\ref{sec:results}. Then, in Section~\ref{sec:structural} we develop and estimate a structural empirical framework to analyze social preferences using our experimental data, and discuss the results of the estimation and the goodness of fit. This section also presents the three counterfactual simulations designed to measure the value of social preferences for trust and reciprocity. Section \ref{sec:conclusion} concludes.

\section{Conceptual Framework \label{sec:theory}}

We represent player decisions in a single period by directed networks with implicit self-links, coupled with a vector of actions. Consider a set of players (agents) $N= \{ 1 \dots n \}$. An individual agent in this set is denoted as $i$. Players have a resource constraint of $\omega_i$ units, and simultaneously choose a contribution level $c_i$ from the interval $[0,\omega_i]$. Thus $(\omega_i - c_i)$ are kept at no cost, and used to produce one unit of value for player $i$.

In addition to choosing a contribution level, players choose a subset $N_i$ of the other players, with whom they would like to split the production cost and benefits. The cost of contribution and the magnitude of externalities are flexibly specified by a \textit{Marginal Per Capita Return} (MPCR) function $m_i(N_i)$, which maps the player's chosen neighborhood to a cost-externality structure. In this way, the marginal cost of contributing is $(1-m_i(N_i))$. For example, in the case of homogeneity in costs and congestive externalities, which will be a focal point in this study, we could set $m_i(N_i) = m(N_i) = \tfrac{1.6}{|N_i|}$ for $|N_i| > 0$. Without loss of generality, we fix $m_i(\emptyset) = 1$. Because we are primarily interested in examining behavior in situations where sharing is not apparently rational under standard, self-interested preferences, we operate under the following assumption:
\begin{assumption}$m_i(N_i) = 1$ for the empty neighborhood $N_i=\emptyset$, and $m_i(N_i) < 1$ for all nonempty neighborhoods $N_i \neq \emptyset$.\label{asn:boundedmpcr} \end{assumption}

This assumption guarantees that players cannot generate efficiency gains without sharing with another player. The outcome of the game in any round can be summarized by an $n\times n$ \textit{adjacency matrix} $A$, with the elements $A_{ij} = 1$ if player $j$ shares with player $i$, and $A_{ij} = 0$ otherwise. Since the benefits of investment are divided between the contributor and their neighborhood, we will fix $A_{ii} = 1$. We will denote $A_i$ as the $i^{th}$ column of $A$ with the $i^{th}$ element removed. Thus $A_i$ is an $(n-1)$-vector from $\{0,1\}^{(n-1)}$ that contains all the information of the neighborhood $N_i$ of a node $i$. In addition, we collect the contributions $c_i$ into a single vector $c$. By slight abuse of notation, we will overload the marginal per-capita return function $m_i(\cdot)$ to accept arguments in the form $A_i$, with $m_i(A_i) \triangleq m_i(\sum_{j=1}^{n-1} (A_i)_j)= m_i(|N_i|)$.

Then we can write players' concrete (observable) monetary payoffs in the following form:\begin{equation}
   \pi_i (A,c) = \pi_i (A_i,c_i ; A_{-i}, c_{-i}) = \left( \sum_{j=1}^n A_{ij}  c_j m_j \left(A_j\right) \right) - c_i
\end{equation}

Importantly, the summation includes $j=i$ because $i$ shares in the benefits of their own investment, although in our cases of interest, they cannot earn positive net return on their own contribution. Incentives can be decomposed into a combination of effort cost and externalities from others' contributions. That is, we can write $\pi_i$ as: \begin{equation}
    \pi_i (A,c) = (m_i (A_i) - 1) c_i + \kappa_i (A_{-i}, c_{-i} )
\end{equation}
where $\kappa_i (A_{-i},c_{-i}) = \sum_{j=1, j\neq i}^n A_{ij} c_j m_j(A_j)$ models pure externalities, which do not affect the decision problem of agent $i$ on the margin, since they cannot be controlled. It is straightforward to show, by backward induction (and using Assumption~\ref{asn:boundedmpcr}), that the following proposition holds:\begin{proposition}
The unique Nash equilibrium of the sharing game (under Assumption~\ref{asn:boundedmpcr}) is $c_i = 0$ and $N_i = \emptyset$.
\end{proposition}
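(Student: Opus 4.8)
The plan is to exploit the decomposition $\pi_i(A,c) = (m_i(A_i) - 1)c_i + \kappa_i(A_{-i}, c_{-i})$ displayed just above the statement, since it isolates exactly the part of the payoff that player $i$ can influence. The term $\kappa_i$ aggregates the benefits flowing to $i$ from other players who select her as a recipient; because $A_{ij}$ and $c_j$ for $j \neq i$ are governed by the choices of the other players, $\kappa_i$ is a constant from $i$'s vantage point and drops out when computing her best response. Thus a best response for $i$ is any pair $(c_i, N_i)$ that maximizes the single scalar $(m_i(A_i) - 1)c_i$ over the feasible set $[0,\omega_i] \times 2^{N \setminus \{i\}}$.

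First I would pin down the sign of the coefficient $m_i(A_i) - 1$ using Assumption \ref{asn:boundedmpcr}. If $N_i = \emptyset$ then $m_i(A_i) = 1$ and the controllable term is identically zero; if $N_i \neq \emptyset$ then $m_i(A_i) < 1$, so the coefficient is strictly negative and $(m_i(A_i)-1)c_i$ is strictly decreasing in $c_i$, attaining its maximum value of $0$ only at $c_i = 0$. Consequently the largest payoff $i$ can secure from her controllable part is $0$, and whenever she has chosen a nonempty neighborhood the unique maximizer is $c_i = 0$. This is the heart of the argument: no player will ever combine a positive contribution with a nonempty set of recipients, since doing so is strictly dominated by contributing nothing.

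With this best-response structure in hand, I would close the equilibrium argument by backward induction on the two decisions: for any fixed neighborhood the optimal contribution is determined by the sign analysis above, and rolling back, the neighborhood choice cannot raise $i$'s own payoff beyond zero. Reading this off, in any Nash profile each player must satisfy $c_i = 0$ or $N_i = \emptyset$; in either case player $i$ transfers nothing of value to anyone else (she either shares zero resources or shares with no one), so every product $A_{ij}c_j m_j(A_j)$ with $j \neq i$ vanishes and hence $\kappa_i = 0$ for all $i$. No player receives any external benefit, every net payoff equals zero, and no effective sharing survives anywhere in the network, with $c_i = 0,\ N_i = \emptyset$ for all $i$ as the representative profile.

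The main obstacle is that the maximand $(m_i(A_i)-1)c_i$ is flat along two faces of the strategy space — it equals zero for every neighborhood once $c_i = 0$, and for every contribution once $N_i = \emptyset$ — so the best-response correspondence is badly non-singleton and literal uniqueness of the strategy profile fails. I would address this by stating the uniqueness claim at the level of the payoff-relevant (sharing) outcome: the analysis above shows that no equilibrium can support any positive transfer, so the outcome in which nothing is shared is unique, and $c_i = 0,\ N_i = \emptyset$ is its canonical representative once we adopt the natural tie-breaking convention that an indifferent player neither contributes nor forms a link. This is the only delicate point; everything else follows directly from the sign of the MPCR coefficient under Assumption \ref{asn:boundedmpcr}.
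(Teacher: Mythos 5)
Your proof is correct and follows essentially the same route as the paper's: drop the externality term $\kappa_i$, observe that the part of the payoff player $i$ controls, $(m_i(A_i)-1)c_i$, is maximized at zero, and conclude that no sharing can survive in equilibrium. If anything you are more careful than the published proof, which simply asserts $\partial\pi_i/\partial c_i = m_i(A_i)-1<0$ without noting that the inequality is only weak when $N_i=\emptyset$, and which does not flag the payoff-equivalent profiles (e.g.\ a positive contribution paired with an empty neighborhood, or links formed by a zero contributor) that make uniqueness hold only at the level of the no-sharing outcome rather than of the literal strategy profile.
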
 \begin{proof}
Since $\kappa(A_{-i},c_{-i})$ is independent of both $c_i$ and $A_i$, this term is dropped from the marginal decision problem when taking first-order conditions. We have:\begin{equation}
    \frac{\partial\pi_i}{\partial c_i} = m_i(A_i) - 1 < 0
\end{equation}
by Assumption \ref{asn:boundedmpcr}. Thus in the stage game, the optimal contribution for any linking pattern is $c_i = 0$ for all $i$.
\end{proof}
The free-riding hypothesis is a prominently studied feature of the voluntary contributions mechanism. Robust evidence from both the lab and field, however, shows that while this \textit{is} the unique equilibrium, realistic play very rarely agrees with this theoretical result of full free-riding \citep{isaac1988group,fisher1995heterogenous}.

Explanations for the failure of the free-riding hypothesis typically fall into two categories---bounded rationality and behavioral or social preferences. Bounded rationality models assert that players make errors in their computation of the optimal strategy. Social preferences models, on the other hand, aim to explain systematic deviations by asserting that the strategies are, in fact, rational and optimal when considering other non-monetary concerns of the players---such as preference for reciprocity or fairness concerns---which bias actions toward prosocial behavior. We describe the mean utility as allowing flexibly specified social preferences:\begin{equation}
    u_{i} (A,c\vert \theta) = \theta_1 \pi_i (A,c) + \beta_i (A,c\vert \theta)
\end{equation}
where $\beta_i$ is a general function specifying the social preferences. Since the externality term $\kappa$ is entirely independent of the action choice of player $i$ in period $t$, we can aid computation by truncating utility to form \textit{individual potentials}:\begin{equation}
    \phi_{i} (A,c\vert \theta) = \theta_1 \left(\pi_i(A,c)- \kappa_i (A_{-i},c_{-i})\right) + \beta_i (A,c | \theta) 
\end{equation}
An effective alternative solution concept should incorporate behavioral preferences, taking $\theta$, which is a vector of parameters, as given. We present the following stability concept referred to as behavioral Nash stability:\begin{definition}\label{def:behavstable}
A state of the network and contribution profiles is \textit{behaviorally Nash stable} if the following conditions hold for all players $i$:\begin{enumerate}
    \item$\phi_i(A_{i}, c_{i} | A_{-i}, c_{-i},\theta) \geq \phi_i(A_{i}', c_i | A_{-i}, c_{-i},\theta) \quad \forall A_i' \in \{0,1\}^{n-1}$\label{cond:1}
    \item One of the following:\begin{itemize}\item$\frac{\partial}{\partial c_i} \beta_i(A,c\vert\theta) = \theta_1(1-m_i(A_i)) $\label{cond:2}
    \item$c_i=\omega_i \;and\; \frac{\partial}{\partial c_i} \phi_i (A,c\vert \theta) > 0$\label{cond:2a}
    \item $c_i = 0\; and\; \frac{\partial}{\partial c_i} \phi_i(A,c\vert\theta) < 0$\label{cond:2b}\end{itemize}
\end{enumerate}
\end{definition}
This equilibrium concept is similar to those used in prior literature studying games on endogenous networks \citep{golub2021games} and, in particular, the notion of ``k-stability'' introduced by \cite{badev2021nash}. Since players can simultaneously and unilaterally update any subset of their links, it is more restrictive than the classical notion of ``pairwise stability" \citep{jackson1996strategic}. The conditions in Definition \ref{def:behavstable} are straightforward from a game theoretic perspective. Condition~\ref{cond:1} guarantees that no player could unilaterally gain by changing their link strategy. Condition~\ref{cond:2} stipulates that the contribution profile must be a Nash equilibrium of the game induced by the behavioral preferences and the network $A$. 

It is also straightforward to characterize what is meant by an efficient outcome. When discussing efficiency, we refer only to the direct monetary outcomes of gameplay.\begin{definition}
    An outcome $(A,c)$ is \textit{efficient} if $\sum_{i=1}^n \pi_i(A,c) \geq \sum_{i=1}^n \pi_i(A',c')$ for all $(A', c') \in \mathcal{A}\times\mathcal{C}$
\end{definition}
In other words, an outcome is efficient if it generates the maximum possible monetary gains for the players. Finally, consider the following classification of congestion effects.
\begin{definition}
    The decision setting is called\begin{itemize}
    \item \textit{Purely congestive} if $m_i(|N_i|) = \frac{k}{|N_i|+1}$ for $|N_i|>0$ and $k\in \mathbb{R}$ with $1 < k < 2$
    \item \textit{Subcongestive} if $m_i(|N_i|) = \frac{k(|N_i|)}{|N_i|+1}$ for $|N_i| > 0$ and $k:\mathbb{Z}_{n}\setminus \{0\} \longrightarrow \mathbb{R}$ is a decreasing function with $1 < k(1) < 2$
    \item \textit{Supercongestive} if $m_i(|N_i|) = \frac{k(|N_i|)}{|N_i|+1}$ for $|N_i| > 0$ and $k:\mathbb{Z}_{n}\setminus \{0\} \longrightarrow \mathbb{R}$ is an increasing function with $1 < k(m) < m$ for all $m\in \mathbb{Z}_n \setminus \{0\}$
    \end{itemize}
\end{definition}
This characterization is inspired by the literature on group size effects in voluntary contributions environments \citep[e.g.][]{isaac1988group}. Pure congestion describes a situation in which players generate fixed efficiency gains by sharing, and gains from their investment are divided between the sharing player and their neighbors of choice. Subcongestivity and supercongestivity refer to settings in which the marginal per-capita return decreases faster or slower with the addition of a link than the purely congestive return. Supercongestivity also contains an interesting special case, which might be termed \textit{anticongestive}, in which the good exhibits a specific type of ``network effect'' -- that is, sharing becomes more effective as the group size increases.

This special case lends itself to a convenient characterization of efficient structure, based on how a player's MPCR scales with their out-degree. That is, the efficient outcome depends on the curvature of the MPCR function. It transitions between these phases when the game is purely congestive, at the boundary between subcongestivity and supercongestivity. Along this boundary, any network in which every player is participating can support efficient outcomes. \begin{proposition}\label{prop:efficiency} A network is efficient in a:\begin{itemize}
    \item Purely congestive game if and only if $|N_i| \geq 1$ and $c_i = \omega_i$ for all players $i$.
    \item Subcongestive game if and only if $|N_i| = 1$ and $c_i = \omega_i$ for all players $i$.
    \item Supercongestive game if and only if $|N_i|=n$ and $c_i = \omega_i$ for all players $i$.
\end{itemize}
\end{proposition}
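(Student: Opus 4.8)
The plan is to reduce the welfare-maximization problem to $n$ independent single-agent problems, exploiting the fact that total surplus is \emph{separable} across contributors. First I would compute the sum of payoffs from a direct accounting identity: a contribution $c_j$ by player $j$ is scaled by $m_j(|N_j|)$ and delivered to each of the $|N_j|+1$ recipients --- her $|N_j|$ chosen neighbors together with herself --- so the gross benefit generated by $j$'s contribution is $c_j\,m_j(|N_j|)\,(|N_j|+1)$, against a private (opportunity) cost of $c_j$. Writing $d_j \triangleq |N_j|$ and summing the net contribution of every player gives
\begin{equation}
\sum_{i=1}^n \pi_i(A,c) = \sum_{j=1}^n c_j\,\bigl[\,m_j(d_j)(d_j+1) - 1\,\bigr] \triangleq \sum_{j=1}^n c_j\, g_j(d_j),
\end{equation}
where $g_j(0)=0$ since $m_j(\emptyset)=1$ by Assumption \ref{asn:boundedmpcr}. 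The same identity follows formally by swapping the order of summation in $\sum_i\sum_j A_{ij}c_j m_j(A_j)$ and using $\sum_i A_{ij}=d_j+1$.

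The crucial structural observation is that each player chooses her out-neighborhood $N_j$ and contribution $c_j$ freely, with no constraint linking distinct players' choices, and the $j$-th summand $c_j g_j(d_j)$ depends only on $(c_j,d_j)$. Welfare is therefore separable, so maximizing $\sum_j c_j g_j(d_j)$ reduces to maximizing each term over $c_j\in[0,\omega_j]$ and $d_j\in\{0,1,\dots,n-1\}$ independently. For a fixed degree the summand is linear in $c_j$, so (using $\omega_j>0$) the optimum sets $c_j=\omega_j$ whenever the attainable maximum of $g_j$ is strictly positive, and the efficient degree is then any maximizer of $g_j$ over $\{0,\dots,n-1\}$.

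It remains to locate $\argmax_{d} g_j(d)$ in each regime, using $g_j(d)=k(d)-1$ for $d\ge 1$ and $g_j(0)=0$. In the purely congestive case $k$ is constant with $1<k<2$, so $g_j(d)=k-1>0$ for every $d\ge 1$ while $g_j(0)=0$; the maximizers are exactly the degrees $d\ge 1$, giving efficiency iff $|N_i|\ge 1$ and $c_i=\omega_i$. In the subcongestive case $k$ is (strictly) decreasing with $k(1)\in(1,2)$, so $g_j$ is strictly decreasing on the positive degrees and $g_j(1)=k(1)-1>0=g_j(0)$; the unique maximizer is $d=1$, giving efficiency iff $|N_i|=1$ and $c_i=\omega_i$. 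In the supercongestive case $k$ is (strictly) increasing and bounded below by $1$, so $g_j$ is strictly increasing and maximized at the largest feasible degree $d=n-1$, i.e. the complete (dense) network, giving efficiency iff the network is dense and $c_i=\omega_i$.

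The one genuine subtlety --- and the step I would treat most carefully --- is the ``only if'' direction, because when $c_j=0$ the term $c_j g_j(d_j)$ vanishes regardless of $d_j$, so degree is payoff-irrelevant for a non-contributor. I would therefore argue that efficiency forces $c_j=\omega_j$ \emph{and} the regime's optimal degree simultaneously: any outcome in which some player contributes less than $\omega_j$ or departs from the optimal degree attains a strictly smaller $j$-th summand (invoking $\omega_j>0$ and the strict comparisons above), hence strictly smaller total welfare. The pure-congestion case additionally requires stating the efficient set as $\{d_j\ge 1\}$ rather than a single degree, reflecting that the coefficient $k-1$ is flat across all positive degrees; uniqueness of the maximizer in the sub- and supercongestive cases is exactly what upgrades ``$d=1$'' and ``dense'' from sufficient to necessary.
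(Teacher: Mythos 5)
Your proof is correct and follows essentially the same route as the paper's: swap the order of summation to write total welfare as a separable sum over contributors, then maximize each player's term independently over degree and contribution. If anything, your version is slightly more careful --- the paper's displayed welfare expressions omit the $-c_i$ cost term (writing $k\sum_i c_i$ rather than $(k-1)\sum_i c_i$, which does not affect the conclusion since $k>1$), and you make explicit the ``only if'' step that the paper leaves implicit, namely that a non-contributor's degree is welfare-irrelevant so necessity must be argued through $c_j=\omega_j>0$.
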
\begin{proof}
 In a purely congestive setting, efficiency gains are characterized by the multiplicative constant $k$. More specifically, since each player's contribution is multiplied by $k$ and divided equally among themselves and their neighborhood, we have \begin{equation}\sum_{i=1}^n \pi_i(A,c) = \sum_{i=1}^n \left( \frac{k}{|N_i|+1}c_i + \sum_{j=1}^n A_{ji} \frac{k}{|N_j|+1}c_j\right)= \sum_{i=1}^n kc_i = k\sum_{i=1}^n c_i,\end{equation} as long as each player forms at least one link. This efficiency is linear and strictly increasing in $c_i$ for all $i$ because the definition of pure congestion prescribes $k>1$. Therefore the sum of all monetary payoffs is strictly increasing in contributions, meaning that an efficient configuration must be characterized by full contribution.
 
 When the setting is sub- or super-congestive, $\sum_{i=1}^n \pi_i (A,c) = \sum_{i=1}^n k(|N_i|) c_i$. Since both $k(\cdot)$ and $c_i$ are restricted to be positive, the function is linear and increasing in both terms. Thus, efficiency prescribes that the players choose the network that generates the largest value of $k(\cdot)$. If $k(\cdot)$ is monotone, this means choosing either the smallest (if the game is subcongestive) or largest (if the game is supercongestive) possible number of links. Finally, Assumption \ref{asn:boundedmpcr} guarantees that $k(0) = 1$ (and thus that $m(0)=m(\emptyset) = 1$. Since $k(\cdot) > 1$ for nonzero arguments in either subcongestive or supercongestive games, any fully efficient structure in these settings can never include an isolated node.
\end{proof}

For the experiment, we turn our focus to the purely congestive case, using the number of free-riders in a group to describe an efficient structure.

     \section{Experimental Design \label{sec:experiment}}
     
     In this section, we describe the design and procedures of the laboratory experiment in greater detail. The experiment was conducted using undergraduate students in the XS/FS Experimental Social Sciences Laboratory at Florida State University. We collected data from a total of 184 subjects across eight sessions. Subjects were recruited using ORSEE \citep{greiner2015subject}, and played a computerized version of the game programmed using zTree \citep{fischbacher2007z}. Instructions used in the experiment, including screenshots of the decision screens, are contained in Appendix~\ref{app:instructions}.
     
     Subjects played a repeated version of a purely congestive resource sharing game, which consisted of a fixed group size of $n=4$, and a homogenous cost/externality structure defined by an MPCR function of $m_i(N_i) = m(N_i) =\frac{1.6}{|N_i|}$. Subjects were informed at the beginning of the session that there would be two parts to the experiment but were not informed about any details of the second part until the first part was completed. Each subject in a group was assigned a unique ID (1--4). In the first part, which consisted of 15 rounds, no information was shown to subjects regarding the individual decisions made by others in their group. Instead, subjects were simply shown their own payoff between rounds. At the end of round 15, subjects were redirected to a waiting screen, at which point the instructions for the second part were read.
     
     The treatment variation was implemented in the second part. In three baseline sessions, consisting of a total of 72 subjects in 18 groups, subjects were told that the second part of the experiment would be exactly the same as the first part, except that subject IDs would be randomly reassigned. In five treatment sessions, consisting of 112 subjects across 28 groups, subjects were also told that they would play the game for another 15 rounds. However, in addition to reassigning ID's, subjects were also told that they would be shown how much benefit they received in the previous round from each other subject in their group. Although providing subjects with information about the past behavior of others in their group does not change the unique Nash equilibrium, this information treatment facilitates direct reciprocity where the baseline sessions do not. After the two main parts of the experiment were finished, subjects completed a series of questionnaires designed to elicit behavioral characteristics. Questions from this section are shown in Appendix~\ref{app:questions}. Sessions lasted no longer than an hour. At the end of the session, subjects were paid privately by check, earning an average of \$16.69, including a \$10 show-up fee.
     
     While providing subjects information about the past behavior of others in their group does not change the unique Nash equilibrium, we expect that it will have a nontrivial impact on efficiency and performance in the market. As such, we estimate the effects of the treatment on a number of outcome measures, and use structural methods to investigate the root causes of these changes in behavior.
     
    \section{Reduced-Form Results \label{sec:results}}
    In this section, we highlight several aggregate, reduced-form results showing the treatment effects of information provision on key outcomes of interest. The reduced-form estimates are presented in Table~\ref{res:reduced}. Moreover, Figure~\ref{fig:did} illustrates the evolution of the mean level (for each key outcome) across experimental periods, with bars representing standard errors clustered at the group level. All estimations highlight a strong impact of the information treatment on key outcomes. For computational convenience, in each estimation, we normalize contributions to lie in the interval between $0$ and $1$.
    

\begin{figure}
\centering
\subcaptionbox{Contributions (in Experimental Currency Units)\label{sfig:contributions}}{\includegraphics[width=0.49\linewidth]{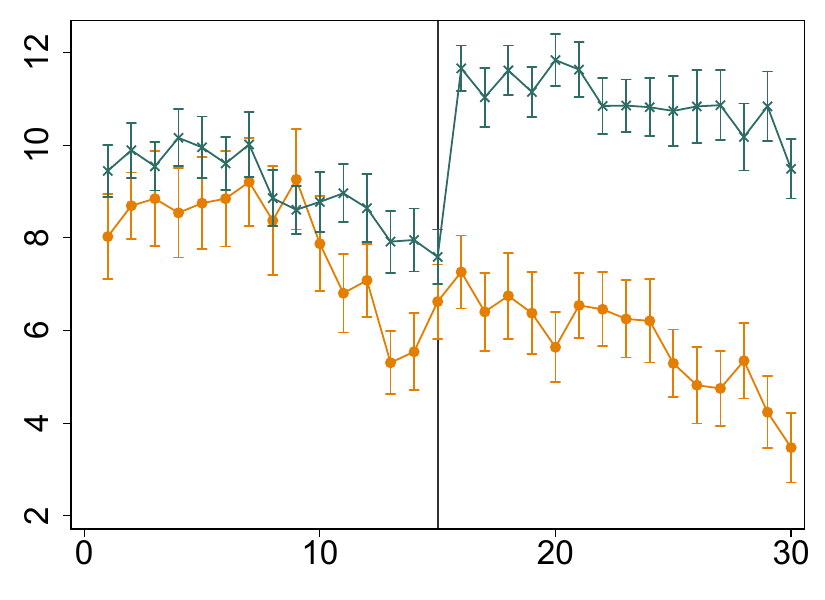}}
\subcaptionbox{Links\label{sfig:links}}{\includegraphics[width=0.49\linewidth]{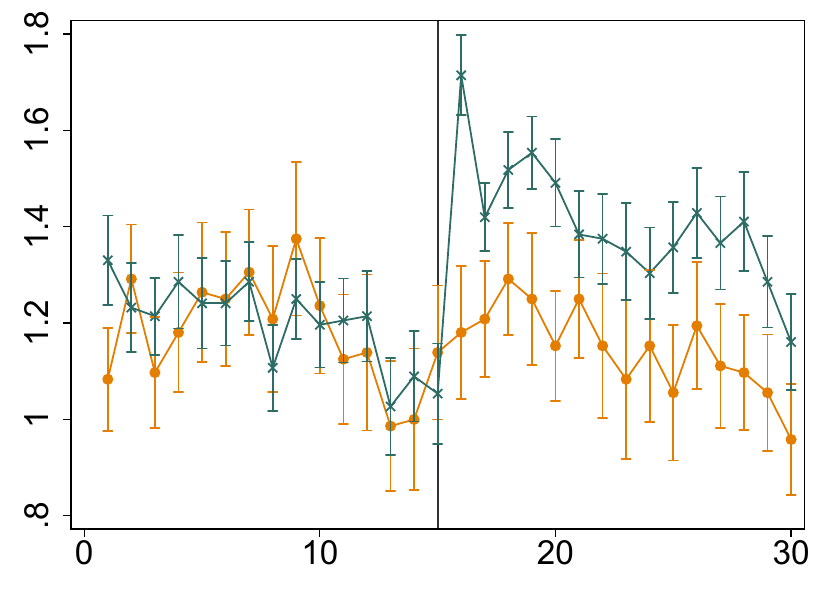}}\\[1.5em]
\subcaptionbox{Isolated Nodes\label{sfig:isol}}{\includegraphics[width=0.49\linewidth]{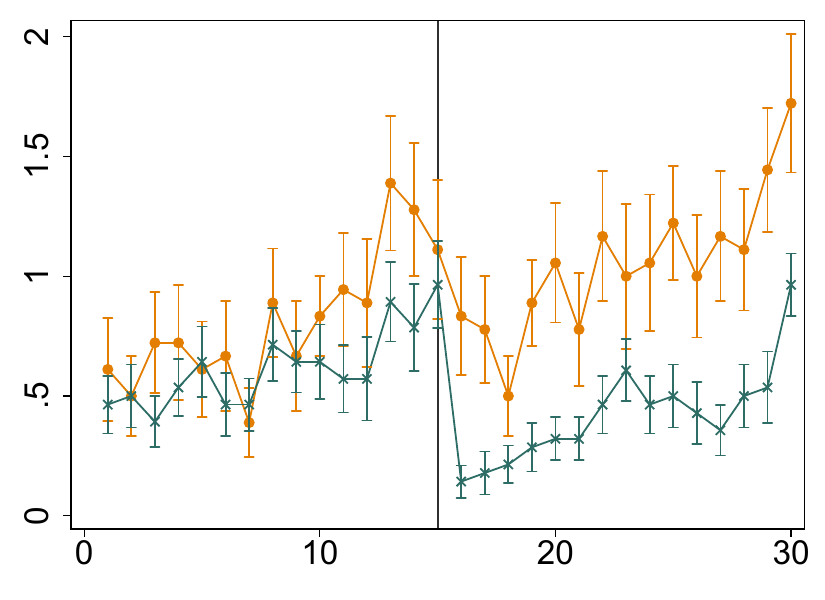}}
\subcaptionbox{Individual Costs\label{sfig:cost}}{\includegraphics[width=0.49\linewidth]{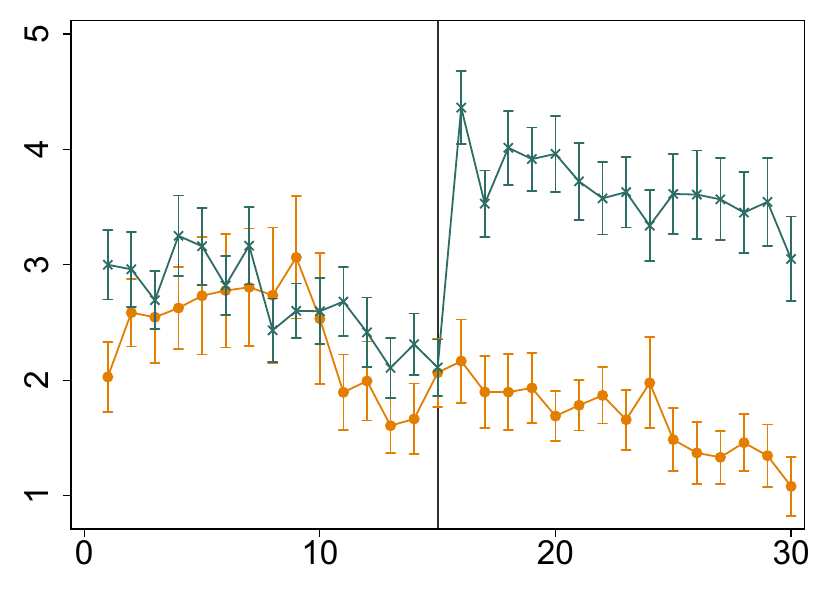}}\\[1.5em]
\subcaptionbox{Reciprocity\label{sfig:rec}}{\includegraphics[width=0.49\linewidth]{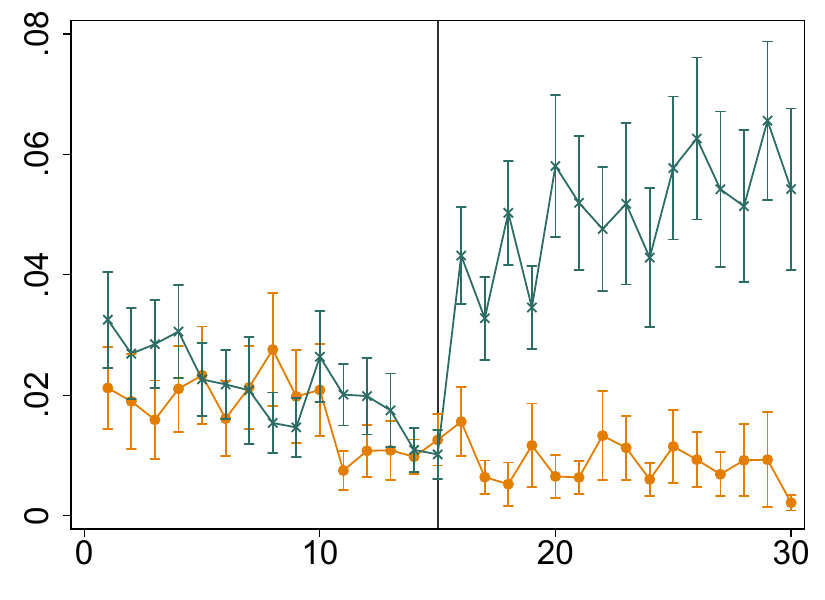}}
\subcaptionbox{Centralization\label{sfig:cent}}{\includegraphics[width=0.49\linewidth]{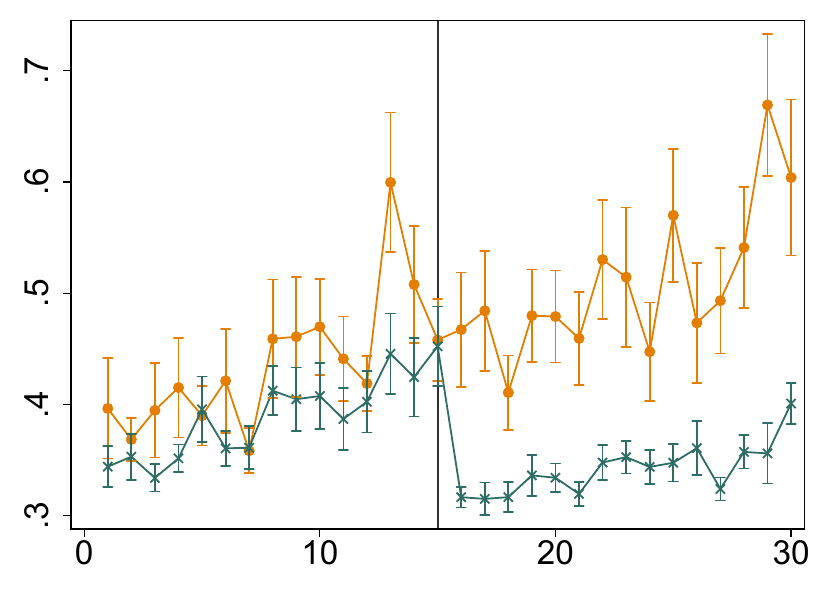}}\\[1.5em]
\subcaptionbox*{}{\includegraphics[width=0.4\linewidth]{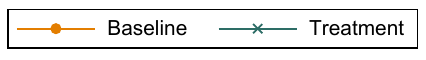}}\\[-2em]
\caption{\textbf{Dynamics of key outcomes}}\label{fig:did}
\end{figure}



\subsection{Contributions and links} A natural question is how contributions and average degree (number of outgoing links) are impacted by the information treatment. The dynamics of these variables can be found in Figures~\ref{sfig:contributions} and \ref{sfig:links} along with corresponding estimation results in columns (1) and (2) in Table~\ref{res:reduced}. We find that the treatment substantially increases both contributions and linking. Contributions still show a tendency to decrease over time, and the rate of this decay is not significantly impacted by the treatment. Links, on the other hand, exhibit an additional differential dynamic. Although there is a strong initial boost from the treatment intervention, decay in the number of links actually appears to accelerate mildly relative to groups in the baseline sessions.

\begin{result}
    Contributions and average degree (number of links) are both significantly higher in the treatment condition than in the baseline condition.
\end{result}

\begin{table}[t]\small\caption{\textbf{Treatment effects on key outcomes}\label{res:reduced}}
   {
\def\sym#1{\ifmmode^{#1}\else\(^{#1}\)\fi}\scriptsize
\begin{tabular}{l*{6}{c}}
\hline\\
            &\multicolumn{1}{c}{(1)}&\multicolumn{1}{c}{(2)}&\multicolumn{1}{c}{(3)}&\multicolumn{1}{c}{(4)}&\multicolumn{1}{c}{(5)}&\multicolumn{1}{c}{(6)}\\
            &\multicolumn{1}{c}{Contributions}&\multicolumn{1}{c}{Links}&\multicolumn{1}{c}{Efficient structure}&\multicolumn{1}{c}{Costs}&\multicolumn{1}{c}{Reciprocity}&\multicolumn{1}{c}{Centralization}\\
\hline\\
Period           &     -0.0319&     -0.0253  &     -0.0105&      -0.0106&   -0.000667&     0.00640\\
                    &   (0.00455)         &    (0.0118) &   (0.00295)         &   (0.00212)         &  (0.000160)         &   (0.00100)         \\
[1em]
Treatment         &       0.783&       1.668 &       0.375 &      0.353&      0.0252&      -0.124\\
                    &    (0.0900)    &     (0.252) &    (0.0593)          &    (0.0446)         &   (0.00482)         &    (0.0173)         \\
[1em]
T $\times$ Period        &      0.0108         &     -0.0644 &     -0.0147  & 7.58$\times$10$^{-6}$         &     0.00202  &   -0.00243         \\
                  &    (0.0101)         &    (0.0274) &   (0.00436)      &   (0.00577)         &  (0.000805)         &   (0.00150)         \\
[1em]
Group FE      &       Yes&      Yes&       Yes&     Yes&      Yes&      Yes\\       \\
\hline
\(N\)       &        1380         &        1380         &        1380         &        1380         &        1380         &        1362         \\
\hline
\multicolumn{7}{l}{\footnotesize Standard errors in parentheses, clustered for 46 groups}\\
\end{tabular}
}

\end{table}

\subsection{Efficient Structure} The estimation in column (3) of Table~\ref{res:reduced} shows the effects of the treatment on subjects' abilities to coordinate on efficient structure. By efficient structure, we refer to a network topology that satisfies the conditions required for efficiency by Proposition~\ref{prop:efficiency}, without necessarily satisfying the requirement of full contributions. In other words, this is a binary indicator taking the value $1$ if the observed network structure is theoretically capable of supporting efficient outcomes, and takes the value $0$ otherwise. Because the version of the game we implemented was purely congestive, an efficient structure is one in which there are no ``isolated'' players who do not form any outgoing links. Figure~\ref{sfig:isol} shows how the number of isolated players changes over time. As shown in the figure and estimates, the treatment has an immediate impact in supporting efficient structure. This is evidenced by a large downward jump in the number of isolated players at the time of intervention---and correspondingly, a significant positive effect of Treatment on Efficient Structure in column (3). The likelihood of an isolated node, however, continues to increase over time after the intervention; and similar to the average degree (number of links), its growth appears to accelerate. This indicates that degree reductions are not coming exclusively from highly connected players, which in turn impacts the structural ability of the social system to reach efficient outcomes.

\begin{result}
    The information treatment has a significant positive impact on the formation of efficient network structure, by reducing the prevalence of isolated nodes.
\end{result}

\subsection{Individual Costs} Contribution decisions and link decisions in isolation may not capture the true dynamic of behavior. This is because both variables together determine the cost of sharing---the marginal return on a player's contribution in this purely congestive game decreases as they share with more other players. For each observed action, we compute the direct cost to the sharing individual as $(1-m_i( A_i))c_i$. The dynamics of these individual costs are illustrated in Figure~\ref{sfig:cost}. Due to the decreasing trend in both contributions and linking, costs follow a downward trend. Treatment sessions see a large fixed jump in individual costs, and like contributions, the differential effect is not significantly different from zero, indicating that costs continue to decrease at the same rate as they did before the intervention, albeit from a significantly higher starting point.

\begin{result}
    Individual costs of sharing are significantly higher in the treatment condition than in the baseline, although they exhibit a similar downward trend over time in both conditions.
\end{result}

\subsection{Reciprocity} A commonly cited behavioral driver of sharing behavior in similar environments, and a natural candidate for behavioral preferences, is reciprocity \citep[e.g.][]{fehr1997reciprocity,fehr1998reciprocity}. Our main hypothesis is that individuals in the treatment condition can use their new information to coordinate with other players who are sharing with them. This type of reciprocity would make contribution decisions locally complementary. Thus, a natural measure for such reciprocity is the product of incoming and outgoing benefit flows from each player. We compute this measure for each network cross-section as $\sum_{i=1}^n \sum_{j=1,j\neq i}^n A_{ij}A _{ji}m_i(A_i)m_j(A_j) c_i c_j$. This measure exhibits a negative trend in the baseline sessions as players fail to successfully coordinate on reciprocal outcomes in the absence of information. 

The information intervention has a positive effect both immediately and differentially. In the treatment condition, reciprocity displays a large positive level effect and switches from following a downward to an upward trend. These trends are apparent in Figure~\ref{sfig:rec} and supported by the coefficient estimates in column (5) of Table~\ref{res:reduced}. The upward trend in reciprocity after the treatment intervention indicates that the downward trending individual costs in treatment sessions can be explained primarily by players progressively pruning unreciprocated links. It also suggests that, even using information only from the past, players can succeed in coordinating on stable concurrently reciprocated links.

\begin{result}
    Reciprocity is significantly higher in the treatment condition than in the baseline condition. Moreover, it increases over time in the treatment, while it decreases over time in the baseline.
\end{result}

\subsection{Centralization} Finally in Figure~\ref{sfig:cent} and column (6) of Table~\ref{res:reduced}, we estimate the impact of the treatment on (de)centralization. To measure centralization, we use the Herfindahl-Hirschman index (HHI), which is computed as $\sum_{i=1}^n \left(\frac{1.6*c_i}{\sum_{i=1}^n 1.6*c_i}\right)^2$ (restricting focus to situations in which there is at least one player who is sharing). We observe that centralization tends to increase over time, as a small set of players tend to emerge who end up generating most of the efficiency gains for the group. The treatment has an immediate impact in encouraging decentralized networks, characterized by a more balanced profile of actively sharing players. Point estimates also suggest a differential impact of the treatment, in that it tends to not increase as quickly in treatment sessions as in the baseline. However, our estimate of this effect is not quite significant at the $10\%$ level, with a p-value of $0.112$.

\begin{result}
    In the treatment condition, we observe more decentralized networks---exhibiting a more balanced profile of contributors---than in the baseline condition.
\end{result}

Taken together, these reduced-form results suggest that subjects are highly effective at using the specific information provided in the treatment condition to coordinate on more efficient collaboration networks. To further investigate the driving forces behind these treatment effects, we adopt a structural approach. In the next section, we introduce our structural framework and implement panel structural estimators of network formation in order to better understand the underlying behavioral traits that generate our reduced-form results.

    \section{Structural Estimation \label{sec:structural}}

\subsection{Structural Model} We use network formation econometrics to structurally model the environment and investigate the root causes of the behavioral changes observed between information environments. When incorporating both prosociality and unobserved decision factors, players make decisions based on the following evolutionary random utility model:\begin{equation}
   \mathbb{E} \left[ U_{i} (A_t, c_t|\theta) | \Omega_{it}\right] = \theta_1\left(\pi_i (A_{t},c_{t}) - \kappa_i (A_{-i},c_{-i})\right) + \mathbb{E}\left[ \beta_i (A,c|\theta) | \Omega_{it} \right] + \epsilon_{it}
\end{equation}
where $\Omega_{it}$ is used to denote the information available to player $i$ at time $t$. Here, utility is linearly decomposed into three separate forces---a weighting of preference for monetary earnings $\pi_i$, a behavioral term $\beta_i$ describing social preferences, and a structural error term $\epsilon_{it}$ which is i.i.d. across link structure alternatives and time. The structural shock $\epsilon_{it}$ represents alternative decision factors, observed by the decision maker but unobserved by the econometrician.

Each player then chooses their strategy to maximize this expected utility, so that:
\begin{equation}
    \mathbb{E}_i \left[ U_i(A_i,c_i | \theta) | \Omega_{it} \right] \geq \mathbb{E}_i \left[ U_i(B_i,d_i|\theta) | \Omega_{it}\right] \quad\forall (B_i, d_i) \in \{ 0,1 \}^{n-1} \times [0,\omega] 
\end{equation}

 Using a model based on beliefs allows us to estimate individual strategic evolution while avoiding the bias that arises as a result of the simultaneous nature of decision-making in this setting. This is enabled by our panel of small networks---the rare but ideal setting in which to study network formation.\footnote{This is as opposed to the potential games approach which uses a single large cross-section and estimates parameters from the stationary distribution, given some assumptions on the symmetry of preferences, and the assumption of nonzero meeting probabilities for all dyads.} Individuals form expectations over the incoming value that will be sent to them from each other player, conditional on the local information that they are given. We denote the information available to the player in period $t$ as $\Omega_t$, representing a set of observations of previous behavior upon which players can form reasonable expectations about their neighbors' future actions. 

We can also relax the usual assumption that players move individually, which is common in the cross-sectional and large network estimators \citep{mele2017astructural,badev2021nash}. This assumption is typically used to ensure the convergence of logit-response dynamics to a steady state distribution \citep{foster1990stochastic,alos2010logit} from which parameters are then estimated. Observing a panel of linking decisions by a subset of nodes, set in small networks, allows us to directly (and tractably) estimate utility parameters from the evolution of gameplay. While the assumption of players using logit best-response is somewhat strict, it is a substantial relaxation of the assumptions used in cross-sectional estimators. Specifically, we no longer need to make assumptions regarding the meeting process for individual revisions.

At time $t$, some (known) subset of nodes are selected to update their linking strategy. We place no restrictions on who or how many are chosen for revision. The nodes that are selected have the opportunity to change their link selection and contribution. They select the combination link set and contribution that maximize their expected utility, including the preference shock which is subject to the following assumption:\begin{assumption}
The structural preference shock $\epsilon_{it}$ is i.i.d. Extreme Value type 1.
\end{assumption}

This allows us to estimate agents' best responses from variations of the multinomial logit:\begin{equation}
    \label{eqn:logit}f_i(A_i,c_i | \Omega_{it}, \theta) = \frac{exp(\phi_{i}(A_i, c_i | \Omega_{it},\theta))}{\int_{0}^{\omega_i}\sum_{G_i \in \{ 0,1\}^{n-1}} exp(\phi_i (G_i,\delta_i | \Omega_{it},\theta))d\delta}
\end{equation}

Using this logit best-response process, the probability of observing network $A$ and contribution profile $c$ in period $t$ is:\begin{equation}
    f(A,c|\Omega_t, \theta) = \prod_{i=1}^n f_i (A_i,c_i | \Omega_{it},\theta)
\end{equation}

\subsection{Methodology} We observe a panel of $N$ groups over $T$ time periods. We denote the network from group $k$ in period $t$ as $A_t^k$. The resulting log-likelihood function is:\begin{equation}
    \ell(\theta) = \sum_{t=2}^T \sum_{k=1}^N \sum_{i=1}^n \ln\left(f_i(A_{it}^k,c_{it}^k \vert \Omega^k_{it}, \theta) \right) = \sum_{t=2}^T \sum_{k=1}^N \sum_{i=1}^n \phi_i (A_{it}^k,c_{it}^k | \Omega_{it}^k,\theta) - \ln \left(Z_{it}(\theta,\Omega_{it}^k)\right)
\end{equation}
where $Z_{it}(\theta,\Omega_{it}^k)$ is the partition function from the denominator of the density \eqref{eqn:logit},
\begin{equation}
    \label{eqn:partition}Z_{it}(\theta,\Omega_{it}^k) = \int_0^{\omega_i} \sum_{G_i \in {\{0,1\}}^{n-1}} \exp(\phi_i (G_i, \delta_i \vert \Omega_{it}^k, \theta))d\delta.
\end{equation}
The MLE is then defined as the vector of parameters that maximizes this objective,\begin{equation}
    \hat{\theta} = \argmax_{\theta\in \Theta} \ell(\theta).
\end{equation}

 The graphs in Figure~\ref{fig:depgraphs} highlight the computational benefits associated with panel estimators of network formation compared to a pooled cross-sectional estimator. In these graphs, nodes represent random variables; a circle represents a single variable (the contribution of Player $i$), and a square represents a dense subgraph in which all of the variables contained in the labeled set are completely connected. A link indicates statistical dependence of the receiving variable on the linking variable. A link to or from a square node indicates dependence between all elements in the union of the sets, and bidirectional links are highlighted in dark red, while dashed grey links are not bidirectional. A light blue node is observed by the decision-maker and thus can be conditioned on in estimation, while all interconnected dark red nodes must be treated as a single outcome variable. Since each player has $n-1$ potential links in the network, each set $A_{i}^{t}$ contains $n-1$ interdependent random variables. 
 
 Panel estimation improves computation time by conditioning on each player's available information to sever concurrent interdependence between links and contributions across players. This works by essentially performing $n$ simultaneous independent multinomial logit estimations across $n$ variables (each potential link plus a contribution level), as opposed to a single multinomial logit over $n^2$ outcomes, saving a total of $2^{(n^2)}-2^{n}$ operations when the support of $c_i$ is binary. For perspective, in a network of size $n=10$, panel estimation reduces the number of required floating point operations from $1.27\times 10^{30}$ to just 1,024 operations per cross-section.

\begin{figure}[!t]
\centering
\subcaptionbox{Cross-sectional\label{fig:xsection}}{\includegraphics[width=0.5\linewidth]{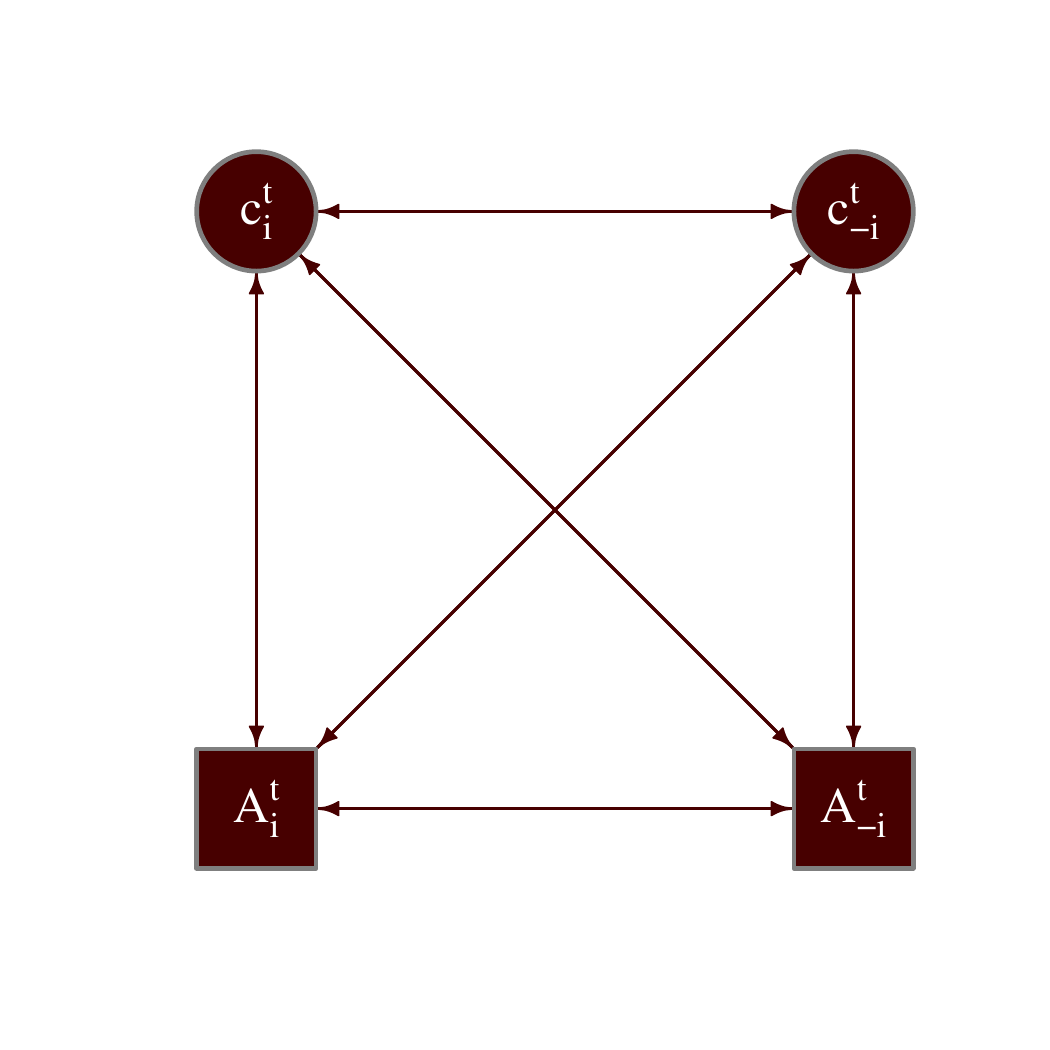}}\subcaptionbox{Panel\label{fig:dg}}{\includegraphics[width=0.5\linewidth]{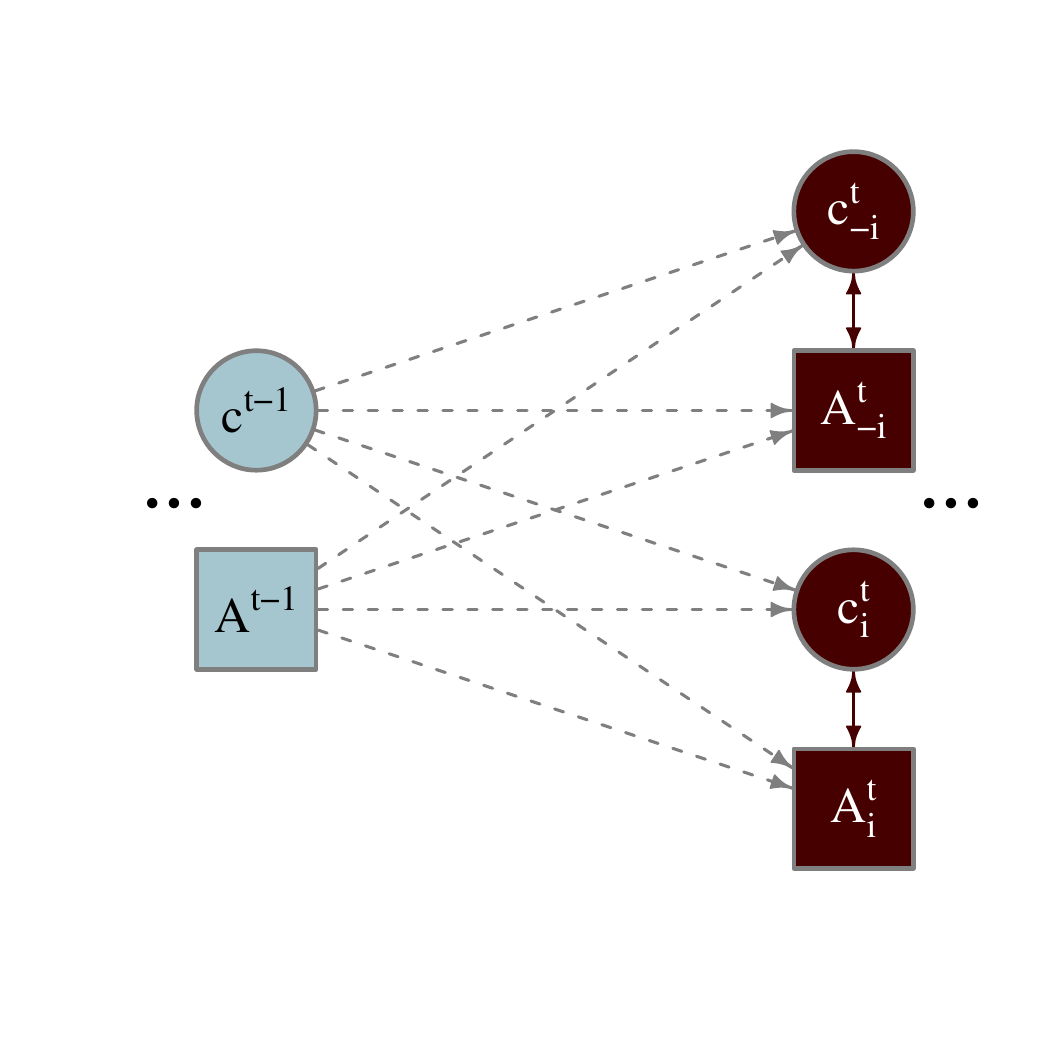}}
\caption{\textbf{Dependence graphs for a two-player game.}}\label{fig:depgraphs}
\end{figure}

\subsection{Inference}\label{ss:inference}

We also estimate uncertainty in these environments by computing asymptotic standard errors using a partial likelihood approach, to account for possible misspecification of the joint density.\footnote{For example, misspecification may arise due to serial correlation in the error term within a group. See chapter 13 of \cite{wooldridge2010econometric} for details of this approach.} The partial likelihood approach assumes only that the conditional distribution is properly specified by $f$, and focuses on a partial likelihood function $\ell$ which is defined as \begin{equation}
    \ell^p_k (\theta) = \sum_{t=1}^T \sum_{i=1}^n \log f_t (A_{it}^k,c^k_{it}\mid \Omega^k_{it}, \theta)
\end{equation} Point estimates from the MLE and PMLE are identical, but the asymptotic variance of the partial likelihood estimator adjusts for clustering by group.

That is, in the limit as $N\rightarrow \infty$, the distribution of $\hat{\theta}_{\text{PMLE}}$ is determined by the central limit theorem\begin{equation}
    \sqrt{N} \left(\hat{\theta}_{\text{PMLE}} - \theta_0\right) \sim \mathcal{N} \left( \mathbf{0}, A_0\inv B_0 A_0\inv \right)
\end{equation}
where $A_0 = -\mathbb{E}\left[ \nabla^2_\theta \ell_k^p(\theta_0) \right]$ and $B_0 = \mathbb{E}\left[ \nabla_\theta \ell_k^p (\theta_0) \left( \nabla_\theta \ell_k^p (\theta_0) \right)^\top \right]$. The key difference between the PMLE and MLE standard error formulas lies in the summation over $t$ within the function $\ell_k^p$, which must be accounted for in the computation of the expected score and Hessian. In this exponential family, the scores can be expressed as the sufficient statistics minus their expected values under $\theta$. Typically, in a large network, this would need to be estimated using Monte-Carlo methods, but conveniently, because the limited group size allows us to directly compute the normalizing constant for each $f_{it}$, we can efficiently compute the scores directly during the calculation of the normalizing constant. Crucially, the summation across time periods must take place before the outer product in an estimate of the Fischer information matrix.

Following \cite{wooldridge2010econometric}, we estimate $A_0$ and $B_0$ as \begin{align}\hat{A}_0 &= -\frac{1}{N}\sum_{t=1}^T \sum_{k=1}^N \sum_{i=1}^n \nabla_\theta \ell_{it}^k (\theta) \left( \nabla_\theta \ell_{it}^k (\theta) \right)^\top\\ \hat{B_0} &= \hat{A}_0 + \frac{1}{N} \sum_{k=1}^N \sum_{t=1}^T \sum_{r\neq t} \sum_{i=1}^n \nabla_\theta \ell_{it}^k (\theta) \left( \nabla_\theta \ell_{ir}^k (\theta) \right)^{\top}\end{align} where the additional cross products in $\hat{B}_0$ compensate for potential serial correlation. Finally, the asymptotic variance matrix is estimated as $\frac{1}{N}\hat{A}_0\inv \hat{B}_0 \hat{A}_0\inv$.

Because of the reasonably limited size of our dataset, we also construct p-values through a fully nonparametric bootstrap based on resampling of full group sequences. The nonparametric clustered bootstrap approach works as its name suggests. Sequences are drawn from the sample (with replacement) to form a new sample, matching the total number of observations and proportion of observations of treatment groups. In order to ensure proper clustering, it is crucial that each sequence of networks from one group be treated as a single observation. The estimation routine is then run again on this new dataset, and the estimates are stored. Standard errors are obtained as the standard deviation of estimates made by repeating this process many times. This approach leverages the argument that drawing from the observed sequences with replacement is the best approximation we can make to drawing from the true distribution of sequences without making any distributional assumptions. As such, the nonparametric bootstrap should generally provide a more conservative estimate of the true variability of the estimators.

Further, we do not expect the MLE to reach its asymptotic distribution within a limited sample size. Therefore, we make no assumptions on the normality of the estimator evaluated on our sample, which would be implicit in using the standard error as our metric of uncertainty and suggesting that it would imply t-statistics or p-values. Instead, we present nonparametric estimates of the p-values which are computed directly from the bootstrap results as the empirical probability that the estimator takes the opposite sign to what was estimated on the full sample.

\subsection{Estimation Results} We turn next to the structural estimation of the discrete choice framework presented above. We first estimate the following simple specification of behavior, in order to understand how the treatment changes incentives by directing attention toward direct and generalized reciprocity, (as in \cite{fehr1997reciprocity,fehr1998reciprocity}). We also incorporate some persistence parameters to account for dynamic effects or inertia in decisions:\begin{align*}
    \beta_{it} (A,c | \theta) &= \underbrace{\theta_2T \kappa_i(A_{-i},c_{-i})}_\text{Change in altruism} + \underbrace{\frac{\theta_5 T}{n-1} \sum_{j=1,j\neq i}^n A_{ij,t}A_{ji,t-1} m_i(A_{i,t})m_j(A_{j,t-1}) c_{i,t}c_{j,t-1}}_\text{Reciprocity} \\&+ \underbrace{\frac{\theta_3 + \theta_4 T}{(n-1)^2} \sum_{j=1,j\neq i}^n A_{ij,t} m_i(A_{i,t})c_{i,t}\sum_{k=1,k\neq i}^n A_{ki,t-1}m_i(A_{k,t-1})c_{k,t-1}}_{\text{Generalized reciprocity}}
    \\&+ \underbrace{\theta_6 (c_{i,t}-c_{i,t-1})^2}_{\text{Contribution inertia}} +\underbrace{\theta_7\sum_{j=1,j\neq i}^n |A_{ij,t} - A_{ij,t-1}|}_{\text{Link inertia}}
\end{align*}

\begin{table}[!t]
    \centering
    \small
    \caption{\textbf{Structural parameter estimates}}
    \label{res:struct1}
    \begin{tabular}{l*{3}{c}}
        \hline\\
        \multicolumn{1}{c}{} & \multicolumn{3}{c}{MLE}\\
        & \multicolumn{1}{c}{Estimate} & \multicolumn{1}{c}{SE} & \multicolumn{1}{c}{Bootstrap p-val}\\
        \hline\\
        ($\theta_1$) Contribution costs & 9.355 & 0.107 & 0.000\\[1em]
        ($\theta_2$) Treatment $\times$ & 2.629 & 0.224 & 0.214\\
        \hspace{1.8em} Contribution costs \\
        ($\theta_3$) Generalized reciprocity & 7.011 & 0.090 & 0.005\\[1em]
        ($\theta_4$) Treatment $\times$ & -9.177 & 0.279 & 0.038\\
        \hspace{1.8em} Generalized reciprocity\\
        ($\theta_5$) Treatment $\times$ & 26.155 & 0.382 & 0.000\\
        \hspace{1.8em} Reciprocity \\
        ($\theta_6$) Contribution inertia & -5.082 & 0.038 & 0.000\\[1em]
        ($\theta_7$) Link inertia & -0.542 & 0.009 & 0.000\\[1em]       
        \hline
        Number of groups & 46 & 46 & 46\\
        Number of treatment groups & 28 & 28 & 28\\
        Number of time periods & 30 & 30 & 30\\
        \hline
        Bootstrap samples & & & 1000\\
        \hline
        \multicolumn{4}{l}{Asymptotic standard errors clustered by group}
    \end{tabular}
\end{table}

We estimate this model using the data from our laboratory experiment and present the results of these estimations in Table \ref{res:struct1}.

We focus on the estimates of five primary parameters.\footnote{Both of the inertia terms are significantly negative, indicating a tendency of players to bias their actions toward those that they took in the previous round.} The first parameter, $\theta_1$, captures how players weigh the cost of sharing. Not surprisingly, all estimates of this parameter are positive. This means that players follow their incentives and that, holding all else constant, they are more likely to choose actions that are less costly.

The other four main coefficients concern the behavioral component of payoffs. The positive sign of $\theta_2$, which is the coefficient on the interaction between contribution cost and the treatment, indicates that having access to this new information makes players more careful about who they share with. We can interpret this coefficient as a reduction in altruism due to the treatment since it indicates a tendency of players to focus more on their individual costs of sharing. 

The coefficients $\theta_3$ and $\theta_4$ are related to generalized reciprocity. The interpretation of generalized reciprocity is that it measures the tendency of players to share more when they receive more overall benefits as a result of their group's sharing in the previous round, without regard to who exactly shared with them. The estimate of $\theta_3$ is positive indicating that in the baseline sessions, generalized reciprocity drives some of the observed sharing behavior. The negative coefficient on the treatment effect ($\theta_4$), however, drives behavior in the opposite direction. Because the sum of effects is significantly negative, this suggests a strict tradeoff between generalized and direct reciprocity, and highlights how the information provided in the treatment helps to focus reciprocity toward active collaborators.

The final coefficient of primary interest ($\theta_5$) is on the interaction between the treatment effect and our measure of direct reciprocity. The large magnitude and significance of this estimate indicate that direct reciprocity is indeed a strong driver of the more successful collaboration observed in treatment groups. 

\subsection{Individual heterogeneity\label{ss:heterogeneity}}

\begin{figure}[!t]
    \centering
    \includegraphics[width=0.75\linewidth]{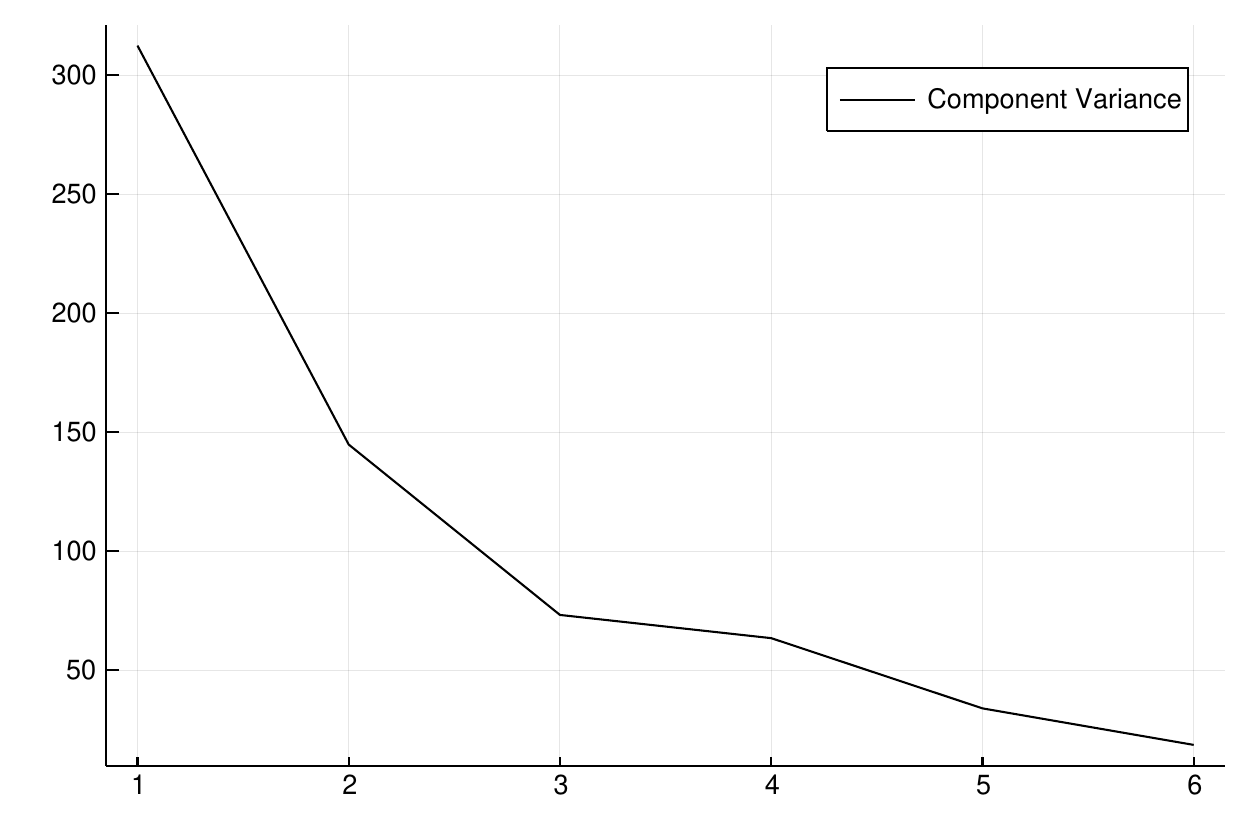}
    \caption{\textbf{Scree plot of the eigenvalues determining reciprocity characteristics}}
    \label{fig:scree}
\end{figure}

After the main parts of the experiment, subjects were asked a series of survey questions to elicit measures of their individual characteristics. All of these questions were aimed at eliciting a subject's heterogeneous preferences toward trust and reciprocity. The full set of survey questions is reproduced in Appendix \ref{app:questions}. In order to distill answers from these questions into a reasonable number of attributes to describe reciprocity and trust, we performed a principal components analysis and selected the top two principal components as representative characteristics of reciprocity, along with the first principal component of the trust questionnaire responses to represent trust. The decision to select two components for reciprocity was determined by referring to a Scree plot of the singular values shown in Figure \ref{fig:scree}. From this, we can see that the first two components of the reciprocity questionnaire describe more than $70\%$ of the variation in survey responses. The first principal component, referred to as overall reciprocity, captures both positive and negative reciprocity, while the second principal component, referred to as positive reciprocity, favors positive reciprocity and eschews negative reciprocity.

We estimate the same model as above with additional interaction terms between the main parameters of interest and the subject's individual characteristics (the trust attribute, overall reciprocity attribute, and positive reciprocity attribute). The results are reported in Table~\ref{res:struct2}. We find strong evidence that individually heterogeneous preferences drive differences in behavior. Importantly, the point estimates for $\theta_{1}$--$\theta_{7}$ are robust to the introduction of these heterogeneous individual characteristics. Tables~\ref{tab:recpc} and \ref{tab:trustpc} show the weights assigned to each response by the relevant components, with negative weights highlighted in grey.

\begin{table}[ht]\caption{\textbf{Structure of reciprocity characteristics from PCA}\label{tab:recpc}}\scriptsize
\begin{tabular}{|l|c|c|}
\hline
\multicolumn{1}{|c|}{}                                                                       & \multicolumn{2}{c|}{PCA Component Weights}     \\ \cline{2-3} 
Prompt                                                                                         & \multicolumn{1}{c|}{Component 1} & Component 2 \\ \hline
If someone does me a favor, I am prepared to &  & 
\\return it.                                        & \multirow{-2}{*}{0.3737}     & \multirow{-2}{*}{0.4665}      \\ \hline
If I suffer a serious wrong, I will take revenge &  & \cellcolor[gray]{.9}
\\ as soon as possible, no matter what the cost. & \multirow{-2}{*}{0.4502}      & \multirow{-2}{*}{\cellcolor[gray]{.9}-0.3793}     \\ \hline
If somebody puts me in a difficult position, I &  & \cellcolor[gray]{.9}
\\ will do the same to them.                       & \multirow{-2}{*}{0.4622}      & \multirow{-2}{*}{\cellcolor[gray]{.9}-0.4069}   \\ \hline
I go out of my way to help somebody who has & & 
\\ been kind to me before.                            & \multirow{-2}{*}{0.3656}      &\multirow{-2}{*}{0.4515}      \\ \hline
If somebody insults or offends me, I will & & \cellcolor[gray]{.9}
 \\ offend or insult them back.                          & \multirow{-2}{*}{0.4345}      &\multirow{-2}{*}{\cellcolor[gray]{.9} -0.2966}     \\ \hline
I am ready to undergo personal costs to help &  & 
\\ somebody who has helped me before.                & \multirow{-2}{*}{0.3487}      & \multirow{-2}{*}{0.4255}     \\ \hline
\end{tabular}
\end{table}
\begin{table}[ht]\caption{\textbf{Structure of trust characteristics from PCA}\label{tab:trustpc}}\scriptsize
\begin{tabular}{|l|c|}
\hline
Prompt                                                                         & Component 1 \\ \hline
In general, one can trust people.                                              & 0.5787                \\ \hline
These days you cannot rely on anybody else.                                    & \cellcolor[gray]{.9}-0.6485               \\ \hline
When dealing with strangers, it is better to be careful before you trust them. & \cellcolor[gray]{.9}-0.4945               \\ \hline
\end{tabular}
\end{table}

\begin{table}[!t]
    \centering
    \small
    \caption{\textbf{Structural parameter estimates with heterogeneity}}
    \singlespacing
    \label{res:struct2}
    \scriptsize
    \begin{tabular}{l c c c}
        \hline
        & \multicolumn{3}{c}{MLE} \\
        & Estimate & SE & \multicolumn{1}{c}{Bootstrap p-val}\\
        \hline\\
        Contribution costs & 10.471 & 0.110 & 0.000\\

        T $\times$ Contribution costs & 2.480 & 0.229 & 0.241\\
        Generalized reciprocity & 9.385 & 0.084 & 0.003\\
        T $\times$ Generalized reciprocity & -13.033 & 0.305 & 0.002\\
        T $\times$ Reciprocity & 25.039 & 0.415 & 0.000\\
        Contribution inertia & -5.626 & 0.037 & 0.000\\
        Link inertia & -0.516 & 0.008 & 0.000\\[1em]
        Individual Heterogeneity \\
        \hline\\
        Trust $\times$ Contribution costs & 1.951 & 0.055 & 0.011\\
        Overall reciprocity $\times$ Contribution costs & -2.966 & 0.066 & 0.051\\
        Positive reciprocity $\times$ Contribution costs & -6.916 & 0.117 & 0.002\\[1em]
        Trust $\times$ T $\times$ Contribution costs & -1.951 & 0.123 & 0.091\\
        Overall reciprocity $\times$ T $\times$ Contribution costs & 4.413 & 0.143 & 0.052\\
        Positive reciprocity $\times$ T $\times$ Contribution costs & 4.266 & 0.131 & 0.068\\[1em]
        Trust $\times$ Generalized reciprocity & 2.128 & 0.048 & 0.099\\
        Overall reciprocity $\times$ Generalized reciprocity & 4.286 & 0.064 & 0.065\\
        Positive reciprocity $\times$ Generalized reciprocity & -8.645 & 0.130 & 0.021\\[1em]
        Trust $\times$ T $\times$ Generalized reciprocity & 0.347 & 0.277 & 0.428\\
        Overall reciprocity $\times$ T $\times$ Generalized reciprocity & -6.300 & 0.330 & 0.052\\
        Positive reciprocity $\times$ T $\times$ Generalized reciprocity & 9.356 & 0.314 & 0.121\\[1em]
        Trust $\times$ T $\times$ Reciprocity & 3.151 & 0.392 & 0.167\\
        Overall reciprocity $\times$ T $\times$ Reciprocity & 3.289 & 0.373 & 0.149\\
        Positive reciprocity $\times$ T $\times$ Reciprocity & -2.345 & 0.382 & 0.302\\[1em]
        \hline
        Number of groups & 46 \\
        Number of treatment groups & 28\\
        Number of time periods & 30 \\
        Number of bootstrap samples & & & 1000 \\
        \hline
        \multicolumn{4}{l}{Asymptotic standard errors and bootstrap p-values clustered for 46 groups.} \\
    \end{tabular}
\end{table}

Because there are only three trust questions, the first principal component summarizes most of the information from the trust questionnaire. It places positive weight on the question that involves trust and negative weights on two questions that suggest mistrust. Perhaps surprisingly, this measure of trust is associated with a positive interaction on contribution costs in the baseline, which indicates that individuals who score highly on trust are less altruistic and more careful about where they direct effort in the baseline. This agrees with the results of \cite{glaeser2000measuring}, which suggest that such trust questionnaires predict trustworthy behavior but do not necessarily predict trusting behavior. Further in line with these results is a strong positive interaction of the trust characteristic with generalized reciprocity in the baseline. This suggests that these individuals are trustworthy in that they respond to sharing by others by increasing their own contribution. However, they are less likely to share blindly and trust that others will reciprocate. In the treatment, estimates of the effect of trust are less precise but suggest a reversal of this phenomenon; they trust that others will reciprocate when they know that others will be aware of their sharing behavior. This is captured by the negative estimate of the interaction between trust, the treatment indicator, and contribution costs, together with the positive estimate of the coefficient for the interaction between trust, the treatment indicator, and direct reciprocity. This sheds more light on information as a mechanism driving the mixed results regarding trust and sharing behavior in public goods games, observed in previous work \citep{anderson2004social}.

The characteristic that we describe as overall reciprocity consists of positive weights on the answers to all of the questions in the reciprocity questionnaire. This includes both questions about positive reciprocity (e.g. ``If someone does me a favor, I am prepared to return it''), as well as negative reciprocity (``If someone puts me in a difficult position, I will do the same to them''). Estimates of the interaction between this characteristic and the behavioral utility terms suggest that these individuals are more altruistic in the baseline and behave more in line with generalized reciprocity. At the onset of the treatment, they also shift more weight toward direct reciprocity. However, this shift toward direct reciprocity is potentially offset by a decrease in altruism (measured by additional weight placed on the costs of contributing) coupled with a strong decrease in generalized reciprocity. This suggests that individuals who have a high overall reciprocity attribute use new information to discriminate between collaborators as a mechanism for punishment.

On the other hand, the second component of reciprocity places positive weight on questions involving positive reciprocity and negative weight on questions involving negative reciprocity or punishment. Individuals who align with this characteristic place much lower weight on the actual cost of contributing, suggesting some altruism. While there is some tradeoff in the treatment, the sign of the aggregate interaction term remains negative in the treatment suggesting that these players are still behaving more altruistically than average. Perhaps surprisingly, there is a strong negative coefficient on the interaction between positive reciprocity and generalized reciprocity in the baseline. These together suggest that their increased sharing is not conditional on having received more benefits from their group, possibly representing a tendency to share in anticipation that others will behave reciprocally. This interpretation is reinforced by a large positive effect of the treatment on generalized reciprocity for this group, offset by a small decrease in direct reciprocity. In other words, these individuals reciprocate by sharing with the entire group, and trusting in the reciprocity of others, rather than by using new information as a tool for punishment.

\subsection{Simulations and goodness of fit}

To address concerns about how well the structural model fits the data, and to enable our counterfactual investigations involving behavioral interventions, we start by conducting some simulations. The simulations are conducted semi-parametrically, by bootstrap sampling initial conditions with replacement and using them as the starting point for a sequence of network and contribution profiles, drawn with Markov-Chain Monte Carlo (MCMC) sampling. Although a full MCMC chain would be slow to mix, we can exploit conditional independence relationships, due to the panel structure, to build a partial Gibbs sampler that draws from each player's next strategies individually---treating a tuple $(A_i,c_i)$ individually using a combination of MCMC and rejection sampling. Whenever possible, we avoid distributional assumptions and opt for semi-parametric procedures.

The partial Gibbs MCMC sampler works as follows; first we draw a starting panel from an initial distribution $f_0(A,c)$. For this, we resample from the starting panels observed in the data either in the treatment or control. Next, we iteratively draw each player's next strategy using a standard Metropolis-Hastings sampler. This leverages the fact that players' strategies are independent, conditional on the previously observed network. Using a proposal distribution $(A^m_i c_i^m) \sim g_i (A_i, c_i)$, the acceptance ratio for sample $m$ is determined by $\alpha = \frac{\exp\left( \phi_i (A^m_i, c^m_i \mid \Omega_{it}, \theta )\right)}{\exp\left( \phi_i (A^{m-1}_i, c^{m-1}_i \mid \Omega_{it}, \theta ) \right)} \frac{g_i (A_i^{m-1},c_i^{m-1})}{g_i (A_i^m, c_i^m)}$, and sample $m$ is accepted with probability equal to $\min \{ 1, \alpha\}$. Due to the standard theory of Metropolis-Hastings sampling, we then have $\lim_{m\rightarrow \infty} (A^m_i, c_i^m) \sim f_i(A_i^m, c_i^m \mid \Omega_{it}^k, \theta)$. In practice, we choose a large number $M$ and choose the action $(A^M_i, c^M_i)$ to be included in the next network panel.\footnote{For this paper, we used a sample length of 10,000 in between accepted actions.} This is repeated for each of the $n$ players, and the resulting panel $(A^M,c^M)$ is then taken to be the next step in the sequence, and this process is then repeated for each of the $T$ time periods to generate a single full sequence. This partial Gibbs sampler (that is, drawing each player's strategy variables separately) greatly improves the number of accepted samples, and thus the speed of mixture, by dramatically decreasing the magnitude of the sample space and thus increasing acceptance probabilities. 

In practice, because we want the proposal distribution to be close to the true distribution, we draw from a proposal distribution $g_i$ that is constructed as follows; the empirical distribution of contributions in the baseline and treatment sessions along with a random binary vector of links with equal probability on each feasible configuration. That is,\begin{equation}g_i (A^m_i, c^m_i) = \begin{cases}
    \frac{1}{2^{n-1}-1} \frac{1}{nNT}\sum_{t=1}^T \sum_{k=1}^N \sum_{i=1}^n \mathbb{I}[c_{it}^k = c_i^m] & \text{ if } c_i^m \neq 0 \text{ and } A_i^m \neq \mathbf{0}\\
    \frac{1}{nNT}\sum_{t=1}^T \sum_{k=1}^N \sum_{i=1}^n \mathbb{I}[c_{it}^k = 0] & \text{ if } c_i^m = 0 \text{ and } A_i^m = \mathbf{0}\\
    0 &\text{ otherwise}
\end{cases}\end{equation}

After drawing a sample of 1000 simulated groups, 500 in the treatment and 500 in the control, we evaluate the fit of the model by replicating our reduced form estimations from Table~\ref{res:reduced}. The results of this replication for the MLE estimates are shown in Tables~\ref{res:mle_gof} and \ref{res:mle_gof_h}. These evaluations of the goodness-of-fit of the structural model are entirely promising---point estimates match up nearly perfectly in sign and closely in magnitude to the reduced form static and dynamic treatment effects. A further look at the simulations in Figures~\ref{fig:counterfactualstrust}, \ref{fig:counterfactualsrec1}, and \ref{fig:counterfactualsrec2} reinforces this finding. In these panels, the cross-group averages of 10,000 simulated groups for each condition are shown as pale lines overlaying the dashed lines (which show the patterns from the original data.) These figures highlight the ability of this structural model to reproduce observed patterns of behavior on all of our key metrics. Crucially, because the vast majority of these metrics (with the sole exception of individual costs) are not directly fit by the structural model, this provides some confidence that the model is accurately capturing key features of subject behavior.

\begin{table}[!t]\small\caption{\textbf{Reduced form replication under MLE estimates}\label{res:mle_gof}}
   {
\def\sym#1{\ifmmode^{#1}\else\(^{#1}\)\fi}\scriptsize
\begin{tabular}{l*{6}{c}}
\hline\\
            &\multicolumn{1}{c}{(1)}&\multicolumn{1}{c}{(2)}&\multicolumn{1}{c}{(3)}&\multicolumn{1}{c}{(4)}&\multicolumn{1}{c}{(5)}&\multicolumn{1}{c}{(6)}\\
            &\multicolumn{1}{c}{Efficient structure}&\multicolumn{1}{c}{Contributions}&\multicolumn{1}{c}{Links}&\multicolumn{1}{c}{Costs}&\multicolumn{1}{c}{Reciprocity}&\multicolumn{1}{c}{Centralization}\\
\hline\\
Period      &    -0.00434&    -0.00896&    -0.00297&    -0.00193&   -0.000319&     0.00102\\
            &  (0.000526)&  (0.000746)&   (0.00211)&  (0.000196)& (0.0000287)&  (0.000191)\\
[1em]
Treatment   &       0.459&       1.113&       1.969&       0.342&      0.0391&      -0.110\\
            &    (0.0363)&    (0.0554)&     (0.126)&    (0.0155)&   (0.00567)&   (0.00923)\\
[1em]
T $\times$ Period&    -0.00803&     -0.0113&     -0.0467&    -0.00470&    0.000195&     0.00175\\
            &   (0.00164)&   (0.00266)&   (0.00583)&  (0.000741)&  (0.000253)&  (0.000438)\\
[1em]
Group FE      &       Yes&      Yes&       Yes&     Yes&      Yes&      Yes\\       \\
\hline
\(N\)       &        30000         &        30000         &        30000         &        30000         &        30000         &        30000         \\
\hline
\multicolumn{7}{l}{\footnotesize Standard errors in parentheses, clustered for 1000 groups}\\
\end{tabular}
}
\end{table}

\begin{table}[!t]\small\caption{\textbf{Reduced form replication from MLE with heterogeneity}\label{res:mle_gof_h}}
   {
\def\sym#1{\ifmmode^{#1}\else\(^{#1}\)\fi}\scriptsize
\begin{tabular}{l*{6}{c}}
\hline\\
            &\multicolumn{1}{c}{(1)}&\multicolumn{1}{c}{(2)}&\multicolumn{1}{c}{(3)}&\multicolumn{1}{c}{(4)}&\multicolumn{1}{c}{(5)}&\multicolumn{1}{c}{(6)}\\
            &\multicolumn{1}{c}{Efficient structure}&\multicolumn{1}{c}{Contributions}&\multicolumn{1}{c}{Links}&\multicolumn{1}{c}{Costs}&\multicolumn{1}{c}{Reciprocity}&\multicolumn{1}{c}{Centralization}\\
\hline\\
Period           &    -0.00217&    -0.00422&     0.00364&   -0.000567&   -0.000156&    0.000735\\
            &  (0.000581)&  (0.000893)&   (0.00213)&  (0.000250)& (0.0000389)&  (0.000181)\\
[1em]
Treatment         &       0.449&       0.876&       1.702&       0.274&      0.0379&     -0.0911\\
            &    (0.0340)&    (0.0536)&     (0.123)&    (0.0156)&   (0.00576)&   (0.00836)\\
[1em]
T $\times$ Period       &    -0.00810&   -0.000625&     -0.0412&    -0.00152&    0.000845&     0.00104\\
            &   (0.00163)&   (0.00271)&   (0.00587)&  (0.000779)&  (0.000246)&  (0.000406)\\
[1em]
Group FE      &       Yes&      Yes&       Yes&     Yes&      Yes&      Yes\\       \\
\hline
\(N\)       &        30000         &        30000         &        30000         &        30000         &        30000         &        30000         \\
\hline
\multicolumn{7}{l}{\footnotesize Standard errors in parentheses, clustered for 1000 groups}\\
\end{tabular}
}
\end{table}

\subsection{Counterfactual behavioral interventions}\label{ss:counterfactuals}

In our estimated model, it is clear that behavioral heterogeneity features (trust, overall reciprocity, positive reciprocity) drive changes in collaboration patterns across the experiments. In a set of counterfactual simulations, we examine the effects of uniform upward shifts in each one of these characteristics across the population. These can be described as ``benign behavioral interventions'', because they mimic the effects of priming players toward a certain social preference but without directly altering payoffs or the information structure. We use these counterfactuals to reason about which behavioral traits would be the most valuable to promote collaboration.

From a platform design perspective, we can think of these as policy counterfactuals to investigate marketing strategies that would emphasize certain behavioral traits. More generally, due to the complex interactions between these characteristics and the different forms of altruism and reciprocity, it is useful to know which behavioral traits are the most valuable in promoting collaboration. The results, shown in Figures~\ref{fig:counterfactualstrust}, \ref{fig:counterfactualsrec1}, and \ref{fig:counterfactualsrec2}, highlight the effects of these counterfactual adjustments. In each figure, we show the average of 10,000 simulations in the counterfactual regime, shown as the darker lines, juxtaposed against 10,000 simulations from the standard model as well as the values actually observed in the data (shown as a dashed line).

\begin{figure}[!t]
    \centering
    \includegraphics[width=\linewidth]{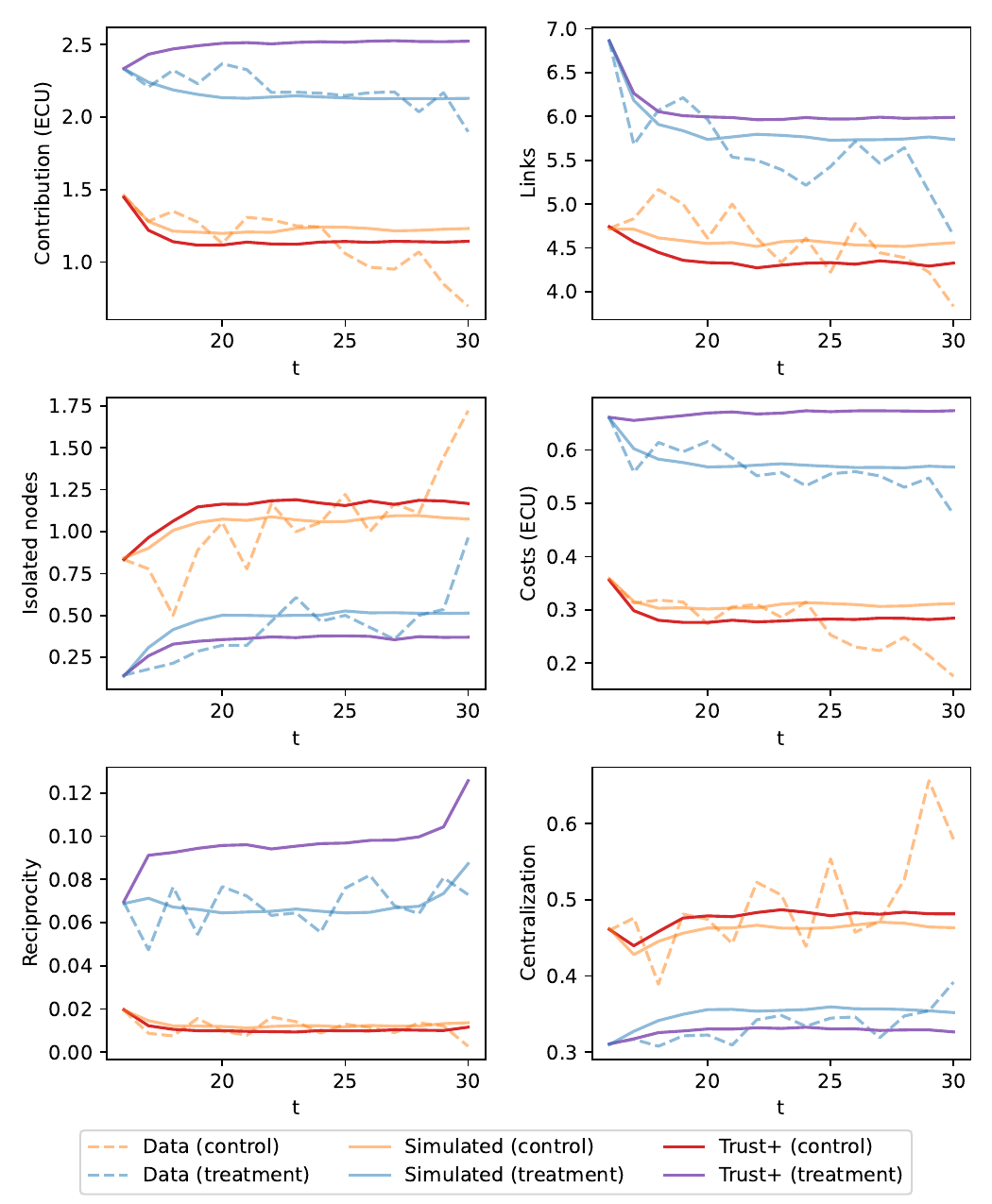}
    \caption{\textbf{Results of counterfactual uniform increases to trust}}
    \label{fig:counterfactualstrust}
\end{figure}\clearpage

\begin{figure}[!t]
    \centering
    \includegraphics[width=\linewidth]{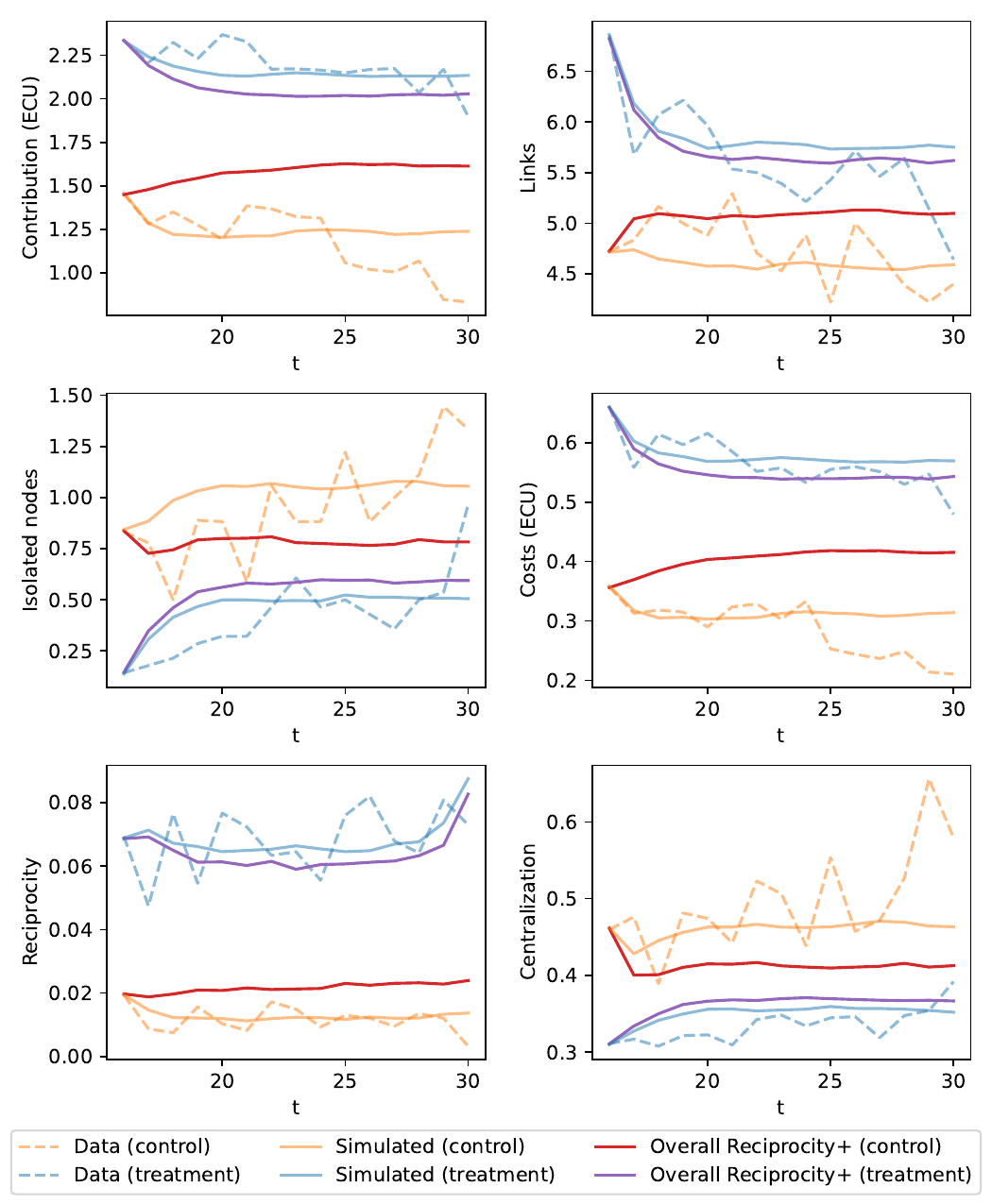}
    \caption{\textbf{Results of counterfactual uniform increases to overall reciprocity}}
    \label{fig:counterfactualsrec1}
\end{figure}\clearpage

\begin{figure}[!t]
    \centering
    \includegraphics[width=\linewidth]{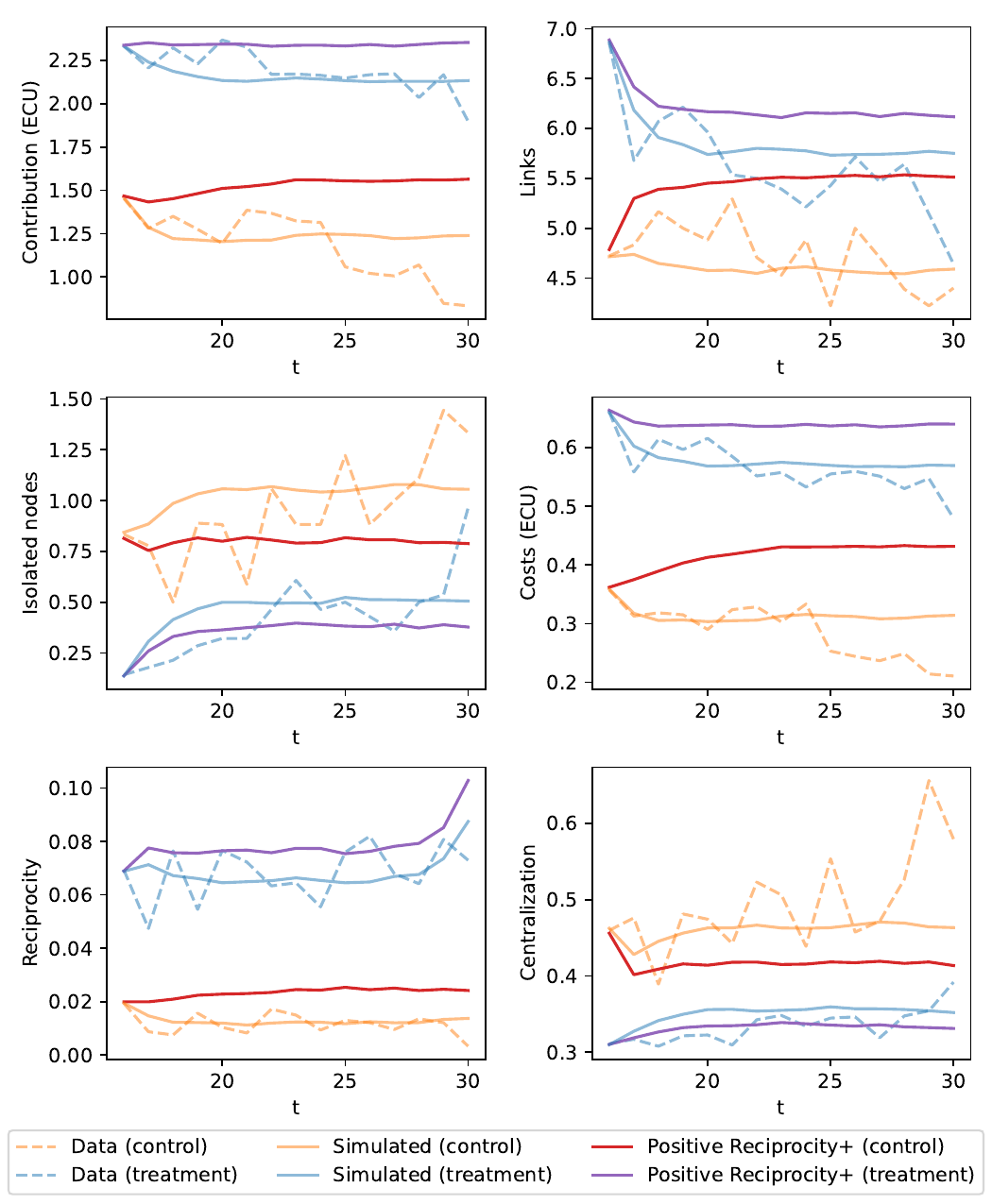}
    \caption{\textbf{Results of counterfactual uniform shifts toward positive reciprocity and away from negative reciprocity}}
    \label{fig:counterfactualsrec2}
\end{figure}\clearpage

Figure~\ref{fig:counterfactualstrust} shows the effect of a one-standard deviation increase in the trust component, applied uniformly across the population. As the figure shows, this intervention improves outcomes along all measures, including contributions, links, reciprocity, and centralization, in the treatment condition. In the baseline, however, the effect is moderately reversed. This can be explained by the reduction in altruism and in the overall willingness to internalize sharing costs, as well as by an increase in free-riding to take advantage of those who are more trusting.

The next set of simulations, shown in Figure~\ref{fig:counterfactualsrec1}, shows the effects of a uniform increase in overall reciprocity. From this figure, we can see that our measure of overall reciprocity is essentially capturing the opposite effect as trust; in the baseline, overall reciprocity creates substantial increases in linking and contributions. In the treatment, however, when subjects are given access to information that can be used as a tool for punishment, observed networks are actually less efficient than before the change.

Finally, Figure~\ref{fig:counterfactualsrec2} shows the effects of an increase in positive reciprocity, and thus a shift away from a punishment mindset and toward a focus on gains from mutual effort and collaboration. As we can see, shifting players toward positive reciprocity has some of the largest effects in the baseline, with large increases in linking behavior and increases in contributions that are similar to those from a shift in overall reciprocity. Unlike the other examples, however, the increases from positive reciprocity are consistent across both the baseline and the treatment, generating substantial efficiency gains in both settings. From this perspective, it seems that positive reciprocity is the most consistently valuable of these social preferences if the goal is to encourage efficient collaboration.

\section{Conclusion\label{sec:conclusion}}

In this paper, we develop a model of collaboration in the form of resource sharing on an endogenous network. Unlike previous models of voluntary resource sharing, we allow individuals to specifically choose the beneficiaries of their externalities. This intuitive extension provides a convenient generalization of the classical voluntary contributions mechanism and allows for flexible specification of group size effects in a way designed to capture realistic incentive problems.

We characterize simple theoretical and empirical frameworks for the voluntary sharing game and use them to estimate social preferences from panels of dynamic network data. Using data collected in the lab, we examine the impact of different information structures on sharing and link formation decisions, with a particular focus on the relative importance of different forms of trust and reciprocity. We find that individuals in these environments are highly effective at using information to coordinate on more efficient outcomes. While this form of collaboration is, theoretically, plagued by free-riding incentives, behavioral biases such as a preference for reciprocity can generate complementarities that help support nontrivial collaborative outcomes.

When provided with more detailed information, players exhibit a clear preference to target the positive externalities they generate toward others who have shared with them, reciprocating directly. Using our structural estimation methods, we characterize the tradeoff between altruism and different forms of reciprocity. We find that subjects tend to rely heavily on direct reciprocity when it is available, substituting away from a more generalized version of reciprocity. We also find that the subjects' reliance on different types of reciprocity is correlated strongly with heterogeneous behavioral characteristics created by a feature analysis of survey questions collected during the experiment. Using the interactions between these features and the different forms of reciprocity, we simulate counterfactual increases in different types of social preferences on the platform. The effects of these interventions vary strongly based on the information environment. Increasing trust can backfire in the baseline by increasing the incentive to free-ride, and increasing overall reciprocity has the opposite effect of driving efficiency gains in the baseline but falling short in the treatment due to increases in punishment behavior. On the other hand, interventions that focus on pushing players toward positive reciprocity (returns from mutual cooperation) and away from negative reciprocity (punishment) have a consistent effect of boosting collaboration and efficiency.

While we have demonstrated the effectiveness of our modeling paradigm and established proof-of-concept, the real power of our methodology is in its capacity to explain far more sophisticated patterns of learning and behavior. In particular, using laboratory experiments to collect small panels of network data opens the door to a new family of statistical tools that can be used to identify the structure of social preferences, and which have been understudied due to the massive size of most network data. In this respect, we hope that our methodology and our reported findings will lay the groundwork for future directions of study, including but not limited to a better understanding of the motivations for and patterns of collaboration that guide the formation and evolution of prosocial behavior in the sharing economy.

\bibliographystyle{te}
\singlespacing
\bibliography{bibliography}
\appendix
\includepdf[scale=0.85,pages={1},pagecommand=\section{Experimental instructions\label{app:instructions}}]{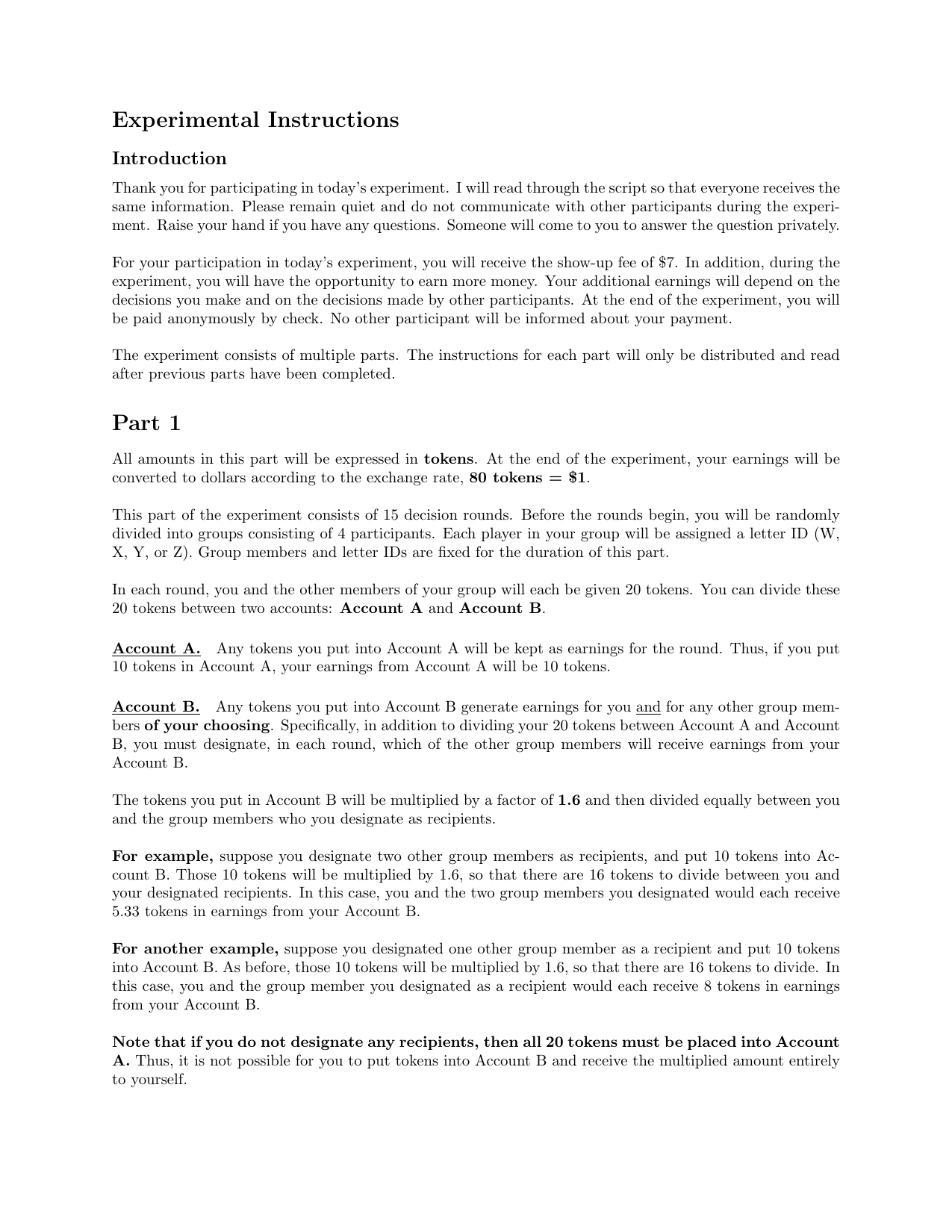}
\includepdf[scale=0.85,pages=2-]{Instructions_091919.pdf}
\includepdf[scale=0.85,pages={1}, pagecommand=\section{Behavioral characteristic questionnaires\label{app:questions}}]{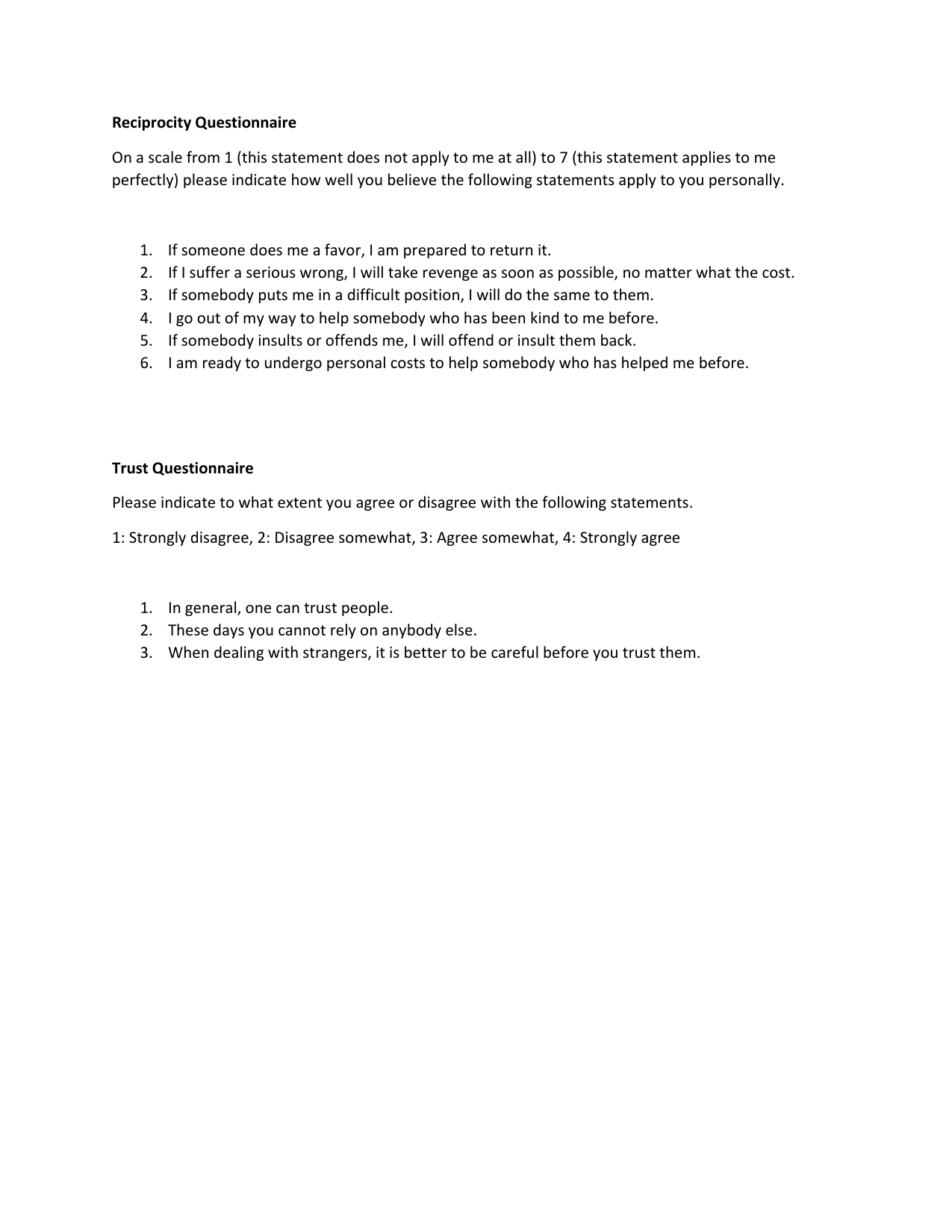}

\end{document}